\documentclass[aps,prd,reprint,amsmath,amssymb,nofootinbib]{revtex4-2}

\usepackage{graphicx}
\usepackage{bm}
\usepackage{mathtools}
\usepackage{mathrsfs}
\usepackage{booktabs}
\usepackage{xcolor}
\usepackage[colorlinks=true,allcolors=blue!55!black]{hyperref}

\newtheorem{theorem}{Theorem}
\newtheorem{proposition}[theorem]{Proposition}

\newtheorem{corollary}[theorem]{Corollary}
\newtheorem{definition}[theorem]{Definition}

\newenvironment{proof}[1][Proof]
 {\par\noindent\textit{#1.}\ }
 {\hfill$\square$\par}

\newcommand{\dd}{\mathrm{d}}

\newcommand{\scri}{\mathscr{I}}
\newcommand{\order}{\mathcal{O}}
\newcommand{\cA}{\mathcal{A}}
\newcommand{\cF}{\mathcal{F}}
\newcommand{\cH}{\mathcal{H}}
\newcommand{\cT}{\mathscr{T}}

\begin{document}

\title{Optimal Extension Regularity at the McVittie Event Horizon}

\author{Yi-kun Li}
\email{liyikun@xao.ac.cn}
\affiliation{State Key Laboratory of Radio Astronomy and Technology,
Xinjiang Astronomical Observatory, CAS, 150 Science 1-Street,
Urumqi 830011, China}
\affiliation{School of Astronomy and Space Science,
University of Chinese Academy of Sciences, No.19A Yuquan Road,
Beijing 100049, China}

\date{August 13, 2026}

\begin{abstract}
We determine the optimal local extension regularity of the future black-hole
event horizon in the exact spatially flat McVittie solutions sourced by a
positive cosmological constant and a barotropic fluid with constant
equation-of-state parameter $w>-1$.  Let $H_\infty$ be the asymptotic Hubble
constant, $\kappa$ the surface gravity of the limiting black-hole root, and
$p=3(1+w)H_\infty/\kappa$.  Ingoing radial null geodesics reach the horizon in
finite affine length.  A parallelly propagated angular curvature component is
asymptotic to $C s^{p-2}$, with $C\ne0$ and $s$ the remaining affine distance,
which excludes every anchored $C^2$ extension for $0<p<2$.  For $p\ge2$ we
construct a parameter-uniform Gaussian-null compactification and an explicit
two-sided Lorentzian collar.  If $p=N+\vartheta$ is nonintegral, with $N\ge2$
and $0<\vartheta<1$, the optimal regularity is the standard big H\"older class
$C^{N,\vartheta}$: extensions of this class exist, whereas no
$C^{N,\vartheta'}$ extension exists for $\vartheta'>\vartheta$.  Every integer
$p\ge2$ instead belongs to an analytic island and admits a real-analytic local
extension.  At the critical value $p=2$ the boundary Einstein endomorphism
has a nonzero rank-one nilpotent part.  The ratio of cosmological decay to
horizon redshift therefore determines a sharp, arithmetic hierarchy of
geometric regularity.
\end{abstract}

\maketitle

\section{Introduction}
\label{sec:introduction}

The differentiability of a black-hole boundary is a geometric question about
how the physical spacetime approaches the boundary and which tidal fields an
affine observer measures there.  It becomes especially delicate when the
boundary lies at infinite cosmological time but finite affine distance.  A
metric can converge pointwise to a stationary black-hole geometry while its
transverse derivatives retain the decay rate of the surrounding matter.  The
competition between this decay and the horizon redshift then determines the
regularity of any completion.

McVittie's solution provides an exact setting in which this competition can
be resolved.  The metric embeds a central Schwarzschild mass in a homogeneous
FLRW background without radial accretion \cite{McVittie1933,Nolan1998}.
For cosmologies tending to de Sitter space, the limiting inner surface has
been identified and interpreted through complementary analyses of radial null
geodesics, causal structure, and explicit conformal extensions
\cite{KaloperKlebanMartin2010,LakeAbdelqader2011,
DaSilvaFontaniniGuariento2013,Nolan2014}.  These studies also show that the
rate at which the Hubble function approaches its limit affects the causal
diagram.  In particular, the radial characteristic equation has a resonance
when the cosmological decay rate equals the redshift rate of the limiting
root.

The regularity of the finite-affine endpoint requires further information.
Scalar curvature invariants remain finite in the positive-$\Lambda$ regime,
yet scalar contractions do not control curvature in a parallelly propagated
frame \cite{EllisSchmidt1979}.  The recent analysis of timelike observers by
Nolan found finite p.p. curvature and regular Jacobi fields for the trajectories
considered there \cite{Nolan2026}.  Radial null observers occupy a distinct
boost sector: their affine tangent grows exponentially relative to the
cosmological slicing, and the boost acts twice on the transverse tidal tensor.
This produces a differentiability threshold that is separated from the causal
resonance by one power of affine distance.

We study the exact spatially flat background formed by a positive cosmological
constant and a perfect fluid with constant $w>-1$.  If $H_\infty$ is the
asymptotic Hubble constant and $\kappa>0$ the surface gravity of the limiting
black-hole root, the relevant dimensionless exponent is
\begin{equation}
 p=\frac{3(1+w)H_\infty}{\kappa}.
 \label{eq:intro-p}
\end{equation}
The numerator is the leading exponential decay rate of the barotropic density;
the denominator converts cosmological time into remaining affine distance.
The same exponent appears in the late-time expansion of Kaloper, Kleban, and
Martin \cite{KaloperKlebanMartin2010}.  Here it is promoted to a complete local
extension classification by combining an invariant obstruction with an
explicit construction.

The construction begins with a compact family of ingoing radial null
geodesics.  Their asymptotic amplitude gives a longitudinal coordinate, while
an integral of the parameter Jacobi field gives a transverse affine
coordinate.  In the resulting Gaussian-null chart, the physical metric has a
parameter-uniform polyhomogeneous expansion with index set
\begin{equation}
 E_p=\{j+np:(j,n)\in\mathbb N_0^2,\ (j,n)\ne(0,0)\}.
 \label{eq:intro-index}
\end{equation}
The unique indicial root is occupied by the freely chosen geodesic amplitude,
and every forced exponent is larger than one.  Consequently no logarithm is
generated for $p\ge2$.  Finite reflection operators then extend the
Gaussian-null coefficients through the endpoint with their attained H\"older
regularity.  When $p$ is an integer, the characteristic equation becomes an
analytic parameter-dependent ODE and yields a convergent two-sided analytic
germ.

Sharpness comes from the angular tidal scalar along the anchored affine
generator.  Its first nonstationary term is
\begin{align}
 \cT&=R_{kAkA}
 =P(\rho)+K(\alpha,p)\rho^{p-2}
 +\order(\rho^{p-1}),\nonumber\\
 K(\alpha,p)&\ne0,
 \label{eq:intro-tidal}
\end{align}
where $P$ is a finite integer-power Taylor polynomial.  Curvature continuity
gives the $C^2$ obstruction for $p<2$.  For nonintegral
$p=N+\vartheta\ge2$, differentiating $N-2$ times leaves a nonzero
$\rho^\vartheta$ term and excludes every higher H\"older exponent.  Anchoring
the extension to the geodesic makes this conclusion independent of the
endpoint chart and follows the geometric logic used in modern spacetime
inextendibility results \cite{Sbierski2024Uniqueness,Sbierski2025Curvature}.

The critical endpoint also records matter information.  At $p=2$ the mixed
Einstein tensor equals its de Sitter value plus a nonzero rank-one nilpotent
endomorphism, giving a type-II null limit.  Two exact McVittie open regions can
nevertheless be joined with a $C^2$ metric, continuous stress tensor, and no
surface layer.  The algebraic boundary value and the open-side field equations
therefore describe complementary aspects of the same completion.

Section~\ref{sec:spacetime} introduces the exact background and the global
event-horizon identification.  Section~\ref{sec:null} obtains the affine null
frame and the curvature obstruction.  Section~\ref{sec:extension} constructs
the Gaussian-null collar, and Sec.~\ref{sec:regularity} proves the optimal
H\"older hierarchy and the analytic islands.  Matter limits and physical
consequences are discussed in Sec.~\ref{sec:consequences}; technical estimates
and the exact-copy construction are collected in the appendices.

\section{McVittie spacetime and its event horizon}
\label{sec:spacetime}

\subsection{Exact cosmological background and matter}

We use geometrized units $G=c=1$ and the signature $(-,+,+,+)$.  The spatially flat McVittie line element in isotropic coordinates $(t,r,\theta,\phi)$ is
\begin{align}
 \dd s^2={}&-\left(\frac{1-\mu}{1+\mu}\right)^2\dd t^2
 +a(t)^2(1+\mu)^4(\dd r^2+r^2\dd\Omega^2),\nonumber\\
 \mu={}&\frac{m}{2a(t)r},
 \label{eq:mcvittie-isotropic}
\end{align}
where $m>0$ is constant and $\dd\Omega^2$ is the metric on the unit sphere.  The constancy of $m$ is the no-accretion condition.  Passing to areal radius
\begin{equation}
 R=a r(1+\mu)^2
 \label{eq:areal-radius}
\end{equation}
and introducing
\begin{equation}
 x=\frac{R}{m},\qquad \tau=\frac{t}{m},\qquad
 h=mH,\qquad S(x)=1-\frac{2}{x},
 \label{eq:dimensionless-defs}
\end{equation}
gives the dimensionless form
\begin{align}
 \frac{\dd s^2}{m^2}={}&-f\,\dd\tau^2
 -\frac{2hx}{\sqrt S}\,\dd\tau\dd x
 +\frac{\dd x^2}{S}+x^2\dd\Omega^2,\nonumber\\
 f={}&S-h^2x^2.
 \label{eq:mcvittie-areal}
\end{align}
The determinant of the two-dimensional orbit metric is $-1$.  This elementary identity simplifies both the radial geodesic equations and the Gaussian-null construction.
The exterior portion of the original McVittie chart has $x>2$.  Constant-$x$ worldlines are generally accelerated, while the fluid four-velocity is orthogonal to the homogeneous time slices in isotropic coordinates.  In the fluid rest frame the radial energy current vanishes, and the off-diagonal term in Eq.~\eqref{eq:mcvittie-areal} arises from the use of areal radius.  Keeping this term near the limiting horizon retains the decay information carried by the dynamical geometry.

The exact cosmological source is a positive cosmological constant together with a perfect fluid satisfying $P_w=w\rho_w$ with constant $w>-1$.  A convenient normalization is
\begin{align}
 \frac{a(\tau)}{a_*}&=\sinh^{\frac{2}{3(1+w)}}z,
 &z&=\frac{3}{2}(1+w)h_\infty\tau, \label{eq:scale-factor}\\
 h(\tau)&=h_\infty\coth z,
 &\dot h&=-\frac{3}{2}(1+w)(h^2-h_\infty^2),
 \label{eq:hubble-exact}
\end{align}
where a dot denotes $\dd/\dd\tau$ and $h_\infty=mH_\infty>0$.  The big bang lies at $\tau=0$, while $h\to h_\infty$ as $\tau\to\infty$.  Einstein's equations for Eq.~\eqref{eq:mcvittie-areal} yield
\begin{equation}
 8\pi\rho=3H^2,\qquad
 8\pi P=-3H^2-\frac{2\dot H}{\sqrt{1-2m/R}}.
 \label{eq:mcvittie-fluid}
\end{equation}
The mixed Einstein tensor has one timelike eigenvalue and a threefold spatial eigenvalue, and the energy flux measured in the fluid rest frame vanishes.  Equation~\eqref{eq:mcvittie-fluid} also shows that the fluid approaches vacuum energy as $\tau\to\infty$ at every fixed $x>2$.
The density is homogeneous, while the pressure acquires the factor $S^{-1/2}$ required by hydrostatic balance around the central mass.  Decomposing $\rho=\rho_\Lambda+\rho_w$ gives $\rho_\Lambda=3H_\infty^2/(8\pi)$ and $P_\Lambda=-\rho_\Lambda$.  The remaining component obeys $P_w=w\rho_w$ in the cosmological region; the inhomogeneous local pressure in Eq.~\eqref{eq:mcvittie-fluid} follows from the same background $\dot H$.  As $H-H_\infty$ decays, both the density contrast and this pressure correction vanish at the limiting horizon.  Their decay rate controls differentiability.

The exact late-time variable
\begin{equation}
 y=e^{-3(1+w)h_\infty\tau}
 \label{eq:ydef}
\end{equation}
puts the Hubble tail in the useful form
\begin{equation}
 h-h_\infty=\frac{2h_\infty y}{1-y}
 =2h_\infty\sum_{n=1}^{\infty}y^n.
 \label{eq:coth-tail}
\end{equation}
Thus the leading dimensionless decay rate is
\begin{equation}
 \bar\lambda=3(1+w)h_\infty.
 \label{eq:lambda-bar}
\end{equation}
The full geometric series in Eq.~\eqref{eq:coth-tail} controls the borderline normal form in Sec.~\ref{sec:extension}.

\subsection{Limiting roots and surface gravity}

The scalar $\nabla_aR\nabla^aR$ equals $f$, equivalently $1-2M_{\rm MS}/R$ in terms of the Misner--Sharp mass \cite{MisnerSharp1964}, so the marginal spheres satisfy $f=0$.  At late times,
\begin{equation}
 f_\infty(x)=1-\frac{2}{x}-h_\infty^2x^2.
 \label{eq:f-infinity}
\end{equation}
For
\begin{equation}
 0<h_\infty^2<\frac{1}{27},
 \label{eq:subnariai}
\end{equation}
Eq.~\eqref{eq:f-infinity} has two positive simple roots.  We denote the smaller one by $\alpha$.  It is the unique root in $(2,3)$ and obeys
\begin{equation}
 h_\infty^2=\frac{\alpha-2}{\alpha^3}.
 \label{eq:alpha-relation}
\end{equation}
The dimensionless signed surface gravity at this root is
\begin{equation}
 \bar\kappa=\frac{1}{2}f_\infty'(\alpha)
 =\frac{3-\alpha}{\alpha^2}>0.
 \label{eq:kappa-bar}
\end{equation}
The larger root is cosmological and has the opposite sign of $f_\infty'$.  At $h_\infty^2=1/27$ the two roots coalesce at $x=3$; the simple-root asymptotics used below are then replaced by the degenerate Nariai scaling \cite{StuchlikHledik1999}.

The time-dependent black-hole marginal tube is the smaller solution $x_-(\tau)$ of $f(\tau,x)=0$.  Since $h$ decreases monotonically, $x_-(\tau)$ approaches $\alpha$ from above after the two positive roots form.  Its motion supplies a useful geometric picture.  The equation $f=0$ locates the marginal tube, while causal reachability identifies the event horizon \cite{FaraoniMorenoNandra2012}.
At the creation time of the pair, the two roots are degenerate.  Thereafter the regular region $f>0$ lies between the black-hole and cosmological marginal tubes.  Outgoing radial light rays increase in areal radius throughout this region, while the ingoing family can either move inward or be carried outward by expansion according to the sign of $D=\sqrt S-hx$.  The limiting root $\alpha$ is reached only at $\tau=\infty$ in the areal chart.  Its finite affine accessibility is established from the boost equation in Sec.~\ref{sec:null}.

\subsection{Identification of the future event horizon}

The global properties needed here follow from the radial-null results proved for expanding big-bang McVittie spacetimes \cite{Nolan2014,Nolan2026}.  In particular, Proposition 1.1 of Ref.~\cite{Nolan2026} states the future completeness of the outgoing family and the finite-affine endpoint of the ingoing family.  The exact background in Eqs.~\eqref{eq:scale-factor} and \eqref{eq:hubble-exact} satisfies its hypotheses: $H>0$, $\dot H<0$, the regular region is nonempty at sufficiently late time, $H$ has the positive limit $H_\infty$, and the limiting black-hole root is simple under Eq.~\eqref{eq:subnariai}.  Outgoing radial null geodesics in the regular exterior escape to the expanding cosmological end, whereas the ingoing family considered below reaches $(\tau,x)=(\infty,\alpha)$ in finite affine parameter.

Relative to that future cosmological end, the limiting null boundary is therefore
\begin{equation}
 \mathcal H^+=\partial J^-(\scri^+_{\mathcal E}),
 \label{eq:event-horizon}
\end{equation}
the future black-hole event horizon \cite{HawkingEllis1973,Wald1984}.  This identification supplies the global anchor for the local differentiability problem.  The analysis of Secs.~\ref{sec:null} and \ref{sec:extension} follows the incomplete ingoing generators to their endpoints on $\mathcal H^+$ and determines their radial-null tidal behavior there.  The invariant location of this problem is the finite affine endpoint reached as $\tau\to\infty$.
The hypotheses also explain the role of the exact barotropic family.  The big-bang origin fixes the expanding branch, monotonicity of $H$ prevents late-time oscillations of the marginal tubes, and the positive limit creates a stationary reference end.  Simplicity of the black-hole root supplies an exponential redshift and a nonzero $\bar\kappa$.  These properties place the whole interval $w>-1$ under the same causal theorem while allowing the late-time matter exponent to vary continuously.  They provide a common global exterior for comparing the causal and curvature thresholds.

\section{Affine null geometry and tidal regularity}
\label{sec:null}

\subsection{Radial null system and parallelly propagated frame}

It is useful to retain the affine scale when taking the late-time limit.  Let $\ell=\lambda/m$ be a dimensionless affine parameter and define
\begin{equation}
 D(\tau,x)=\sqrt{S(x)}-h(\tau)x.
 \label{eq:Ddef}
\end{equation}
The future-directed ingoing branch of the radial null equation is
\begin{equation}
 \frac{\dd x}{\dd\tau}=-\sqrt S\,D.
 \label{eq:ingoing-null-slope}
\end{equation}
Writing $\mathcal Q=\dd\tau/\dd\ell$, its affinely parametrized tangent is
\begin{equation}
 k=\mathcal Q\left(\partial_\tau-\sqrt S D\,\partial_x\right),
 \label{eq:k-vector}
\end{equation}
and the remaining geodesic equation reduces to
\begin{equation}
 \frac{\dd\mathcal Q}{\dd\ell}
 =\mathcal Q^2\left[\frac{2}{x^2}
 -\frac{h(x-1)}{x\sqrt S}\right].
 \label{eq:Q-equation}
\end{equation}
Equations~\eqref{eq:ingoing-null-slope} and \eqref{eq:Q-equation} imply the null constraint and both coordinate geodesic equations.  A constant rescaling of $\ell$ rescales $\mathcal Q$ and the amplitudes below while preserving their exponents and nonvanishing character.
The function $D$ vanishes on the time-dependent black-hole marginal tube.  Near the limiting root it is linear in $x-\alpha$ and in $h-h_\infty$, so Eq.~\eqref{eq:ingoing-null-slope} contains both the stationary redshift and the cosmological forcing.  The affine equation supplies the corresponding boost information: $\mathcal Q$ grows as the horizon is approached, encoding the relation between an affine photon frame and the cosmological time slicing.  Fixing $-u\cdot k=1$ at any regular exterior event determines this boost up to the choice of that event and leaves the threshold invariant.

A complementary null vector with $k\cdot n=-1$ is
\begin{equation}
 n=\frac{1}{2S\mathcal Q}\partial_\tau
 +\frac{\sqrt S(hx+\sqrt S)}{2S\mathcal Q}\partial_x.
 \label{eq:n-vector}
\end{equation}
Together with
\begin{equation}
 e_{\hat\theta}=\frac{1}{x}\partial_\theta,
 \qquad
 e_{\hat\phi}=\frac{1}{x\sin\theta}\partial_\phi,
 \label{eq:angular-frame}
\end{equation}
these vectors form a null orthonormal frame.  Direct use of the connection of Eq.~\eqref{eq:mcvittie-areal} gives
\begin{equation}
 \nabla_k k=\nabla_k n=\nabla_k e_{\hat\theta}
 =\nabla_k e_{\hat\phi}=0.
 \label{eq:parallel-frame}
\end{equation}
The angular vectors are parallelly transported because the change of the unit-sphere basis is exactly canceled by the radial warp connection.  Appendix~\ref{app:null-frame} gives the component calculation and an independent warped-product derivation.
Parallel propagation is the crucial normalization for a tidal statement.  Pointwise orthonormality leaves a position-dependent boost freedom, which Eq.~\eqref{eq:parallel-frame} fixes along the generator.  The resulting frame is regular for every finite $s>0$, and its normalization is transported unchanged to the incomplete endpoint.

\subsection{Complete radial-null tidal tensor}

We quote dimensionless frame components $\mathcal R_{IJKL}=m^2R_{abcd}e_I^ae_J^be_K^ce_L^d$.  Spherical symmetry leaves five independent entries along a radial null geodesic:
\begin{align}
 \mathcal R_{knkn}&=-\frac{2}{x^3}-h^2-\frac{\dot h}{\sqrt S},
 \label{eq:Rknkn}\\
 \mathcal R_{k\hat A k\hat B}
 &=-\mathcal Q^2\sqrt S\,\dot h\,\delta_{AB}
 =-\frac{1}{x}\frac{\dd^2x}{\dd\ell^2}\delta_{AB},
 \label{eq:RkAkB}\\
 \mathcal R_{n\hat A n\hat B}
 &=-\frac{\dot h}{4\mathcal Q^2S^{3/2}}\delta_{AB},
 \label{eq:RnAnB}\\
 \mathcal R_{k\hat A n\hat B}
 &=\left(\frac{1}{x^3}-h^2-\frac{\dot h}{2\sqrt S}\right)\delta_{AB},
 \label{eq:RkAnB}\\
 \mathcal R_{\hat\theta\hat\phi\hat\theta\hat\phi}
 &=h^2+\frac{2}{x^3}.
 \label{eq:Rangular}
\end{align}
Here $A,B\in\{\hat\theta,\hat\phi\}$.  The equality in Eq.~\eqref{eq:RkAkB} is the spherical warped-product identity
\begin{equation}
 R_{kAkB}=-\frac{1}{R}\frac{\dd^2R}{\dd\lambda^2}\delta_{AB}.
 \label{eq:warped-identity}
\end{equation}
It provides a derivation independent of the four-dimensional Riemann contraction.  Polynomial scalar invariants formed from the Riemann tensor remain finite as $(\tau,x)\to(\infty,\alpha)$ because $h\to h_\infty$, $\dot h\to0$, and $S(\alpha)>0$.  The factor $\mathcal Q^2$ in Eq.~\eqref{eq:RkAkB} records the unbounded boost of the affine null frame relative to the cosmological fluid.
The five entries give the complete radial-null tidal tensor.  The boost-enhanced angular component is sensitive to the time derivative of the Hubble function.  Its conjugate component carries $\mathcal Q^{-2}$ and decays, while the mixed and purely orbital components tend to their Schwarzschild--de Sitter values.  The independence of these frame components leaves the divergence in Eq.~\eqref{eq:RkAkB} intact.  The warped-product equality in Eq.~\eqref{eq:warped-identity} also identifies its observable effect: it is the relative angular acceleration of neighboring radial photons.

\subsection{Sharp asymptotics and the obstruction for \texorpdfstring{$p<2$}{p<2}}

Let $\ell_h$ be the finite endpoint of an ingoing generator and set
\begin{equation}
 s=\ell_h-\ell>0.
 \label{eq:s-def}
\end{equation}
Linearization of Eqs.~\eqref{eq:ingoing-null-slope} and \eqref{eq:Q-equation} at the simple limiting root gives
\begin{equation}
 s=s_0e^{-\bar\kappa\tau}[1+o(1)],
 \qquad
 \mathcal Q=Q_0s^{-1}[1+o(1)],
 \label{eq:s-Q-asymptotic}
\end{equation}
with $s_0,Q_0>0$ after fixing the affine orientation.  The exact background equations give
\begin{equation}
 \dot h=-2h_\infty\bar\lambda e^{-\bar\lambda\tau}
 [1+o(1)].
 \label{eq:hdot-tail}
\end{equation}
Combining the two expressions defines
\begin{equation}
 p=\frac{\bar\lambda}{\bar\kappa}
 =\frac{3(1+w)h_\infty}{\bar\kappa}
 \label{eq:p-def}
\end{equation}
and yields the leading parallelly propagated curvature
\begin{align}
 \mathcal R_{k\hat A k\hat B}
 &=C\,s^{p-2}\delta_{AB}+o(s^{p-2}),\nonumber\\
 C&=2Q_0^2\sqrt{S(\alpha)}h_\infty\bar\lambda s_0^{-p}>0.
 \label{eq:pp-leading}
\end{align}
The normalization-dependent factors in $C$ are finite and positive.  Its nonzero sign follows from $w>-1$, $h_\infty>0$, and the simple-root assumptions.
Every harmonic in the exact coth tail produces a higher power $s^{np-2}$.  The leading $n=1$ term therefore decides the curvature classification throughout the family.  At $p=2$ it tends to a finite nonzero value; for $p>2$ it vanishes, leaving the stationary horizon curvature.  At $p=1$ the power law in Eq.~\eqref{eq:pp-leading} remains regular in its exponent, while the forced radial displacement becomes resonant in horizon-adapted coordinates.

For $0<p<2$, Eq.~\eqref{eq:pp-leading} diverges at a finite affine endpoint.  If a $C^2$ Lorentz metric extended the spacetime through that endpoint, its Riemann tensor would be continuous in a regular frame and its contraction with the parallelly transported vectors would remain finite.  The divergence therefore rules out a $C^2$ extension containing this generator.  This is an anchored obstruction: the curve, affine endpoint, and transported frame are all fixed by the original spacetime.
More explicitly, a hypothetical $C^2$ extension would continue the geodesic and its parallel-transport equation with continuous Christoffel symbols.  The frame would possess a finite limit, and continuity of the extended Riemann tensor would bound each contraction $R_{IJKL}$.  Equation~\eqref{eq:pp-leading} contradicts that bound.  This argument is insensitive to the areal coordinate singularity and to the choice of any other chart on the extension.

The radial-null regularity result now admits a chart-independent statement.

\begin{proposition}[Affine tidal asymptotics]\label{prop:tidal-asymptotics}
Consider the spatially flat, expanding McVittie solution sourced by a positive cosmological constant and a barotropic perfect fluid with constant $w>-1$.  Suppose $0<h_\infty^2<1/27$, so that the limiting black-hole root $\alpha\in(2,3)$ is simple, and define $p$ by Eq.~\eqref{eq:p-def}.  Along every ingoing radial null geodesic ending at the future event horizon \eqref{eq:event-horizon}, the full parallelly propagated Riemann tensor is unbounded if and only if $0<p<2$.  Its divergent component is given by Eq.~\eqref{eq:pp-leading}, and every $C^2$ Lorentzian extension containing that finite-affine endpoint is obstructed.  At $p=2$ the component has a finite nonzero limit, while for $p>2$ its nonstationary part tends to zero and all five independent radial-null frame components remain finite.
\end{proposition}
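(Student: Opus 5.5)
The plan is to reduce everything to the late-time asymptotics of the scalar functions $h$, $\dot h$, $x$, and $\mathcal Q$ along a fixed ingoing generator, and then read off each of the five frame components \eqref{eq:Rknkn}--\eqref{eq:Rangular}. By Eq.~\eqref{eq:parallel-frame}, these components are the full parallelly propagated Riemann tensor: spherical symmetry gives the five independent entries, and the remaining ones follow from the algebraic symmetries.

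\emph{Step 1: asymptotics along the generator.} Write $x=\alpha+\xi$ and expand Eq.~\eqref{eq:ingoing-null-slope} about the simple root. Using $D\approx -(\sqrt{S}\,)^{-1}\bar\kappa\,\xi\cdot(\text{const})-\alpha(h-h_\infty)$, the linearized equation is $\dot\xi=-2\bar\kappa\,\xi+c\,e^{-\bar\lambda\tau}+\dots$, up to the precise normalization inherited from $f_\infty'(\alpha)=2\bar\kappa$. Its solutions satisfy $\xi\to0$, with $\xi=\order(e^{-\min(\bar\kappa,\bar\lambda)\tau})$ or a $\tau e^{-\bar\kappa\tau}$ correction at resonance. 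This is all the qualitative information needed, because $x\to\alpha$, $h\to h_\infty$ and $\dot h\to0$ already suffice for the four components other than \eqref{eq:RkAkB}.

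For $\mathcal Q$, rewrite Eq.~\eqref{eq:Q-equation} in $\tau$ as $\dd\mathcal Q/\dd\tau=\mathcal Q\,g(\tau,x)$, where $g=2/x^2-h(x-1)/(x\sqrt S)$. At $(\infty,\alpha)$, the relation \eqref{eq:alpha-relation} should give $g_\infty=\bar\kappa$. Then $\mathcal Q=Q_1e^{\bar\kappa\tau}[1+o(1)]$, with integrable error because $g-\bar\kappa=\order(|\xi|+|h-h_\infty|)$ decays exponentially. Hence $\ell_h-\ell=\int_\tau^\infty \mathcal Q^{-1}\,\dd\tau'=s_0e^{-\bar\kappa\tau}[1+o(1)]$ is finite, which recovers Eq.~\eqref{eq:s-Q-asymptotic} together with $\mathcal Q=Q_0s^{-1}[1+o(1)]$. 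The one identity I must check carefully is $g_\infty=\bar\kappa$. It is the main computational point, since it fixes the exponent relating $\tau$ to $s$ and therefore the threshold $p=2$.

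\emph{Step 2: the components.} Insert Eq.~\eqref{eq:hdot-tail} and $e^{-\bar\lambda\tau}=(s/s_0)^{p}[1+o(1)]$ into \eqref{eq:RkAkB}. This gives $\mathcal R_{k\hat Ak\hat B}=2Q_0^2\sqrt{S(\alpha)}\,h_\infty\bar\lambda\,s_0^{-p}s^{p-2}[1+o(1)]$, which is Eq.~\eqref{eq:pp-leading} with $C>0$ since $w>-1$ and $h_\infty>0$. The remaining components behave as follows. Components \eqref{eq:Rknkn}, \eqref{eq:RkAnB} and \eqref{eq:Rangular} converge to their Schwarzschild--de Sitter values at $\alpha$, because $S(\alpha)>0$. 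Component \eqref{eq:RnAnB} is $\order(s^{p+2})\to0$. Hence the tensor is bounded if and only if $s^{p-2}$ is bounded, that is, if and only if $p\ge2$. At $p=2$ the limit is $C\neq0$, and for $p>2$ the nonstationary part $Cs^{p-2}$ tends to zero. Higher harmonics of \eqref{eq:coth-tail} only contribute $\order(s^{2p-2})$, so they do not affect this.

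\emph{Step 3: the obstruction.} Suppose $g'$ is a $C^2$ extension containing the endpoint. Then its Christoffel symbols are $C^1$, so the geodesic and the transported frame extend continuously to the endpoint. It follows that $R'_{IJKL}$ contracted with the frame is bounded near $s=0$, and this contradicts the divergence for $0<p<2$. The argument is chart independent, since only the anchored curve and frame from the original spacetime enter it.
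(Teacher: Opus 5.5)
Your proposal is correct and follows the paper's own route: linearize at the simple root to get $s\simeq s_0e^{-\bar\kappa\tau}$ and $\mathcal Q\simeq Q_0/s$, substitute into the five frame components, and rule out a $C^2$ extension by continuity of the extended curvature in the limiting transported frame. Your explicit check $g_\infty=2/\alpha^2-(\alpha-1)/\alpha^2=\bar\kappa$, using $\sqrt{S(\alpha)}=h_\infty\alpha$, is exactly the step the paper leaves implicit. One slip is harmless: the linearized slope is $F_x(h_\infty,\alpha)=-\bar\kappa$, not $-2\bar\kappa$, with $D\approx\bar\kappa\xi/\sqrt{S(\alpha)}-\alpha(h-h_\infty)$; this is what your stated decay $\min(\bar\kappa,\bar\lambda)$ and resonance at $p=1$ actually require, and nothing downstream uses the wrong coefficient.
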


Section~\ref{sec:extension} expresses the same threshold in coordinates adapted
to a compact family of ingoing characteristics and constructs the two-sided
extension in the finite regime.

\section{Gaussian-null extension across the horizon}
\label{sec:extension}

We now pass from the curvature obstruction along one geodesic to a
neighborhood construction.  The neighborhood condition is essential: a
two-sided metric requires control of mixed longitudinal and transverse
derivatives.  An expansion along a single curve supplies too little data.

\begin{definition}[Anchored local extension]
\label{def:anchored}
Fix an affinely parametrized ingoing radial null geodesic $\gamma$ with finite
future endpoint parameter.  An anchored local $C^k$ extension consists of a
$C^k$ Lorentz metric $\widetilde g$ and a smooth isometric open embedding of a
one-sided neighborhood of $\gamma$ into a spacetime containing its affine
endpoint.  The neighborhood contains a compact interval of neighboring radial
null geodesics, and their endpoints form a local hypersurface $\cH$.
The attached collar is local and may carry freely chosen geometric data.
\end{definition}

\subsection{A parameterized family of characteristics}

Write $X=x-\alpha$ and introduce
\begin{equation}
 z=e^{-\bar\kappa\tau},\qquad
 h(z)=h_\infty\frac{1+z^p}{1-z^p}.
 \label{eq:z-h-exact}
\end{equation}
The ingoing equation \eqref{eq:ingoing-null-slope} is
$x_\tau=F(h(\tau),x)$ with
\begin{equation}
 F(h,x)=-S(x)+hx\sqrt{S(x)}.
 \label{eq:characteristic-F}
\end{equation}
Near $(X,z)=(0,0)$ it becomes the exact regular-singular equation
\begin{align}
 z\partial_zX&=\Phi(X,z^p),\nonumber\\
 \Phi(X,y)&=-\frac{1}{\bar\kappa}
 F\left(h_\infty\frac{1+y}{1-y},\alpha+X\right),
 \label{eq:exact-characteristic}
\end{align}
where the arguments of $F$ denote its Hubble and radial entries.
The function $\Phi$ is analytic near the origin and satisfies
\begin{equation}
 \Phi(0,0)=0,\qquad \Phi_X(0,0)=1.
 \label{eq:phi-linear}
\end{equation}

Fix one horizon-reaching generator.  Stability of the simple root provides a
late-time tube of neighboring characteristics which all converge to $\alpha$.
If $u$ labels their data on a late slice, the first variation satisfies
\begin{align}
 \partial_\tau x_u&=F_xx_u,\nonumber\\
 x_u(\tau,u)&=\exp\left(\int_{\tau_0}^{\tau}
 F_x(\sigma,x(\sigma,u))\,\dd\sigma\right)>0.
 \label{eq:first-variation}
\end{align}
Since $F_x+\bar\kappa$ is integrable along the tube,
$e^{\bar\kappa\tau}x_u$ has a finite positive limit.  The same variation
equations through fourth order show that
\begin{equation}
 v=\lim_{\tau\to\infty}e^{\bar\kappa\tau}X(\tau,u)
 \label{eq:amplitude-coordinate}
\end{equation}
is a $C^4$ coordinate on a compact interval $I_v\Subset(0,\infty)$.
The detailed uniform estimates are given in Appendix~\ref{app:asymptotics}.

Expansion of Eq.~\eqref{eq:exact-characteristic} gives
\begin{equation}
 zX_z=X+qX^2+cz^p+
 \order(X^3,Xz^p,z^{2p}),
 \label{eq:characteristic-quadratic}
\end{equation}
where
\begin{equation}
 q=\frac{4\alpha-9}{2\alpha(\alpha-3)(\alpha-2)},
 \qquad
 c=\frac{2\alpha(\alpha-2)}{\alpha-3}\ne0.
 \label{eq:q-c}
\end{equation}
Putting $X=zW$ removes the sole indicial root and produces
\begin{align}
 W_z&=qW^2+cz^{p-2}
 +\order(zW^3,z^{p-1}W,z^{2p-2}),\nonumber\\
 W(v,0)&=v.
 \label{eq:W-equation}
\end{align}
For $p\ge2$ this is a regular parameter-dependent integral equation.  In
particular,
\begin{align}
 p=2:\quad X={}&vz+(qv^2+c)z^2+\order_{C_v^4C_z^2}(z^3),
 \label{eq:X-p2}\\
 2<p<3:\quad X={}&vz+qv^2z^2+\frac{c}{p-1}z^p
 +\order_{C_v^4C_z^2}(z^3),
 \label{eq:X-p23}\\
 p\ge3:\quad X={}&vz+qv^2z^2
 +\order_{C_v^4C_z^2}(z^3).
 \label{eq:X-p3}
\end{align}
At $p=2$ the right-hand side of Eq.~\eqref{eq:W-equation} is analytic at
$z=0$; the quadratic term is therefore ordinary and no $z^2\log z$ term is
generated.

\subsection{Affine Gaussian-null coordinates}

Let $J=X_v$.  On the physical half-neighborhood define
\begin{equation}
 \rho(v,z)=\frac{1}{\bar\kappa}
 \int_0^z\frac{J(v,\zeta)}{\zeta}\,\dd\zeta.
 \label{eq:rho-integral}
\end{equation}
The amplitude normalization gives $J/z\to1$, hence
\begin{align}
 \rho_z&=\frac{J}{\bar\kappa z}>0,\qquad
 \rho_\tau=-J<0,\nonumber\\
 \rho&=\frac{z}{\bar\kappa}+\frac{qv}{\bar\kappa}z^2
 +\order_{C^2}(z^{\min(3,p)}).
 \label{eq:rho-expansion}
\end{align}
Thus $z=z(v,\rho)$ exists on a closed half-collar and
\begin{equation}
 z=\bar\kappa\rho-qv\bar\kappa^2\rho^2
 +\order_{C^2}(\rho^{\min(3,p)}).
 \label{eq:z-inverse}
\end{equation}

The transformation is exact on the physical side.  In $(\tau,v)$ coordinates
the orbit metric is
\begin{equation}
 g_2=-2J\,\dd\tau\dd v+\frac{J^2}{S}\,\dd v^2,
 \label{eq:tau-v-metric}
\end{equation}
while Eq.~\eqref{eq:rho-integral} gives
$\dd\rho=-J\dd\tau+\rho_v\dd v$.  Consequently
\begin{align}
 \frac{\dd s^2}{m^2}
 &=2\,\dd v\dd\rho+\cF(v,\rho)\dd v^2
 +R(v,\rho)^2\dd\Omega^2,
 \label{eq:gaussian-null}\\
 R(v,\rho)&=\alpha+X(v,z(v,\rho)),
 \nonumber\\
 \cF(v,\rho)&=\frac{J^2}{S}-2\rho_v.
 \label{eq:gn-coefficients}
\end{align}
The vector $\partial_\rho$ is null and affine because
$\Gamma^a{}_{\rho\rho}=0$.  Future approach to the endpoint corresponds to
$\rho\downarrow0$.

The parameter estimates leading to Eqs.~\eqref{eq:X-p2}--\eqref{eq:z-inverse}
show that $R$ and $\cF$ belong to $C^2(I_v\times[0,\epsilon))$.  Their boundary
two-jet is
\begin{align}
 R(v,0)&=\alpha,\qquad
 R_\rho(v,0)=\bar\kappa v,\nonumber\\
 R_{v\rho}(v,0)&=\bar\kappa,
 \label{eq:R-boundary-jet}\\
 R_{\rho\rho}(v,0)&=
 \begin{cases}
 4(\alpha-3)(\alpha-2)/\alpha^3,&p=2,\\
 0,&p>2,
 \end{cases}
 \label{eq:R-rhorho-boundary}\\
 \cF(v,0)&=0,\qquad \cF_\rho(v,0)=0,\nonumber\\
 \cF_{\rho\rho}(v,0)&=2/\alpha^2.
 \label{eq:F-boundary-jet}
\end{align}
The cancellation of the two static quadratic terms in
$R(v,\rho)$ is responsible for the simple form of
Eq.~\eqref{eq:R-rhorho-boundary}.  The warped-product identity gives
\begin{equation}
 R_{\rho A\rho B}=-\frac{R_{\rho\rho}}{R}\delta_{AB},
 \label{eq:gn-warped-tidal}
\end{equation}
which reproduces the finite nonzero limit of
Eq.~\eqref{eq:pp-leading} at $p=2$.

\subsection{The two-sided \texorpdfstring{$C^2$}{C2} collar}

For a $C^2$ function $f$ on $\rho\ge0$, define on
$-\epsilon/3<\rho<0$
\begin{equation}
 (E_2f)(v,\rho)=6f(v,-\rho)-8f(v,-2\rho)+3f(v,-3\rho).
 \label{eq:E2}
\end{equation}
The identities
\begin{equation}
 6-8+3=1,\qquad -6+16-9=1,\qquad 6-32+27=1
 \label{eq:E2-moments}
\end{equation}
match normal derivatives of orders zero, one, and two.  Mixed derivatives
with total order at most two therefore have the same limit from both sides.
Apply $E_2$ separately to $R$ and $\cF$, retaining the original functions for
$\rho\ge0$, and set
\begin{equation}
 \widetilde g=2\,\dd v\dd\rho+(E_2\cF)\dd v^2
 +(E_2R)^2\dd\Omega^2.
 \label{eq:c2-extended-metric}
\end{equation}
After shrinking the collar, $E_2R>\alpha/2$.  The orbit determinant is
identically $-1$, so $\widetilde g$ is Lorentzian.  A smooth timelike field is
given by
\begin{equation}
 T=\partial_v-\frac{1+E_2\cF}{2}\partial_\rho,
 \qquad \widetilde g(T,T)=-1.
 \label{eq:collar-timelike}
\end{equation}

The positive collar is identified with its Gaussian-null image in the
original spacetime.  A smaller nested collar separates the newly attached
points from the remaining original charts, yielding a Hausdorff,
second-countable spacetime with a smooth isometric open embedding.  Every
generator in the selected compact family acquires its endpoint at $\rho=0$.
We have proved the constructive half of the $C^2$ classification.

\begin{proposition}[Local $C^2$ extension]
\label{prop:c2-sufficiency}
Under the assumptions of Sec.~\ref{sec:spacetime}, every horizon-reaching
ingoing radial null generator has an anchored local $C^2$ Lorentzian extension
when $p\ge2$.  The extension contains a compact family of neighboring
generators and has continuous curvature across $\cH$.
\end{proposition}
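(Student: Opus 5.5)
The proof assembles the construction of this section into a chain of three verifications: the Gaussian-null coefficients are $C^2$ up to the boundary, the reflected metric is a $C^2$ Lorentzian metric, and the glued space is a genuine spacetime containing the endpoints.

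\emph{Step 1: regularity of the characteristic family.} Start from the exact regular-singular equation \eqref{eq:exact-characteristic} and the substitution $X=zW$. For $p\ge2$ the right-hand side of Eq.~\eqref{eq:W-equation} is continuous up to $z=0$, since $z^{p-2}$ is bounded. The remainder terms $\order(zW^3,z^{p-1}W,z^{2p-2})$ are $C^1$ in $W$. Therefore $W$ solves a Volterra integral equation with initial datum $W(v,0)=v$. Combining this with the $C^4$ dependence of the amplitude coordinate $v$ from Appendix~\ref{app:asymptotics}, differentiation under the integral gives the expansions \eqref{eq:X-p2}--\eqref{eq:X-p3} with remainders controlled in $C^4_vC^2_z$. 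The essential point is that no $z^2\log z$ term is generated. At $p=2$ the forcing $cz^{0}$ is analytic, and for $p>2$ the forced exponent $p$ exceeds the indicial root $1$.

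\emph{Step 2: the Gaussian-null coefficients.} Next, $J=X_v$ inherits the same structure, with $J/z\to1$. Hence $\rho$ defined in Eq.~\eqref{eq:rho-integral} is $C^2$ with $\rho_z>0$ up to $z=0$. The inverse function theorem with parameters then yields $z(v,\rho)\in C^2$ and the expansion \eqref{eq:z-inverse}. Composing gives $R$ and $\cF=J^2/S-2\rho_v$ in $C^2(I_v\times[0,\epsilon))$.

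This is the step I expect to be the main obstacle. One must check that $\rho_v$ and $J^2/S$ remain $C^2$, even though $S$ is evaluated at $x=\alpha+X$ and $\rho_v$ involves $\int_0^z J_v/\zeta\,\dd\zeta$. My plan is to write $J_v/\zeta=W_{vv}$, which is $C^2$ in $z$ by Step 1, so that the integral gains one derivative. The mixed derivatives $\partial_v^a\partial_\rho^b$ with $a+b\le2$ are then controlled by the fourth-order $v$-regularity. I would verify the boundary jets \eqref{eq:R-boundary-jet}--\eqref{eq:F-boundary-jet} directly from the expansions; they serve as a consistency check but are not needed for the regularity argument.

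\emph{Step 3: the two-sided metric.} Apply the reflection $E_2$ of Eq.~\eqref{eq:E2}. The moment identities \eqref{eq:E2-moments} show that $\partial_\rho^b E_2f\to\partial_\rho^b f(v,0)$ for $b\le2$, and tangential derivatives commute with $E_2$. Hence all derivatives of total order at most two match across $\rho=0$, and the glued functions are $C^2$ on the full collar.

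The metric \eqref{eq:c2-extended-metric} has orbit determinant $-1$ identically and $E_2R>\alpha/2>0$ after shrinking the collar, so it is a $C^2$ Lorentzian metric. The field $T$ of Eq.~\eqref{eq:collar-timelike} is a continuous timelike vector field and time-orients the collar. Because the metric is $C^2$, its Christoffel symbols are $C^1$ and its curvature is $C^0$, which gives continuous curvature across $\cH$.

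\emph{Step 4: gluing.} Identify the $\rho>0$ part of the collar with the physical region through the exact transformation \eqref{eq:gaussian-null}. Take the disjoint union with the original spacetime modulo this identification, after restricting to a nested smaller collar. Hausdorffness follows because the limit points at $\rho=0$ have distinct coordinates $(v,0,\theta,\phi)$, and distinct endpoints are separated by the coordinate neighborhoods. Second countability is inherited from the two pieces.

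Each generator of constant $v$ is an integral curve of the affine null field $\partial_\rho$, so it reaches $\rho=0$ at finite affine parameter and continues into $\rho<0$. This yields the anchored extension of Definition~\ref{def:anchored}.
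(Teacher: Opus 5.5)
Your proposal is correct and follows essentially the same route as the paper. You use the $X=zW$ Volterra equation with four-fold $v$-regularity, and your rewriting $\rho_v=\bar\kappa^{-1}\int_0^z W_{vv}\,\dd\zeta$ is exactly the derivative budget of Appendix~\ref{app:asymptotics}. The remaining steps also match the paper: the parameter-dependent inverse $z(v,\rho)$, the $E_2$ reflection with orbit determinant $-1$ and the timelike field $T$, and the nested-collar gluing.

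One small correction is needed in Step~4. The separation that actually requires an argument is between a new point at $\rho=0$ and a point of the original spacetime, not between two new points. It holds because $\tau\to\infty$ as $\rho\downarrow0$, so sequences approaching $\cH$ leave every compact set of the original spacetime; distinct coordinates $(v,0,\theta,\phi)$ only separate the new points from one another.
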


\section{Optimal H\"older regularity and analytic islands}
\label{sec:regularity}

The $C^2$ collar uses only the first three boundary jets.  The exact
characteristic equation contains considerably more structure: its exponents
form a locally finite additive family, and that family determines the complete
regularity hierarchy.

\begin{definition}[Endpoint classes]
\label{def:holder-extension}
For $N\ge2$ and $0<\vartheta\le1$, an anchored
$C^{N,\vartheta}$ extension is an extension in the sense of
Definition~\ref{def:anchored} whose metric coefficients belong to the standard
big H\"older class $C^{N,\vartheta}$ in every smooth endpoint chart.  At
$\vartheta=1$, derivatives of order $N$ are locally Lipschitz.  An anchored
real-analytic extension carries a compatible real-analytic manifold structure
and has jointly real-analytic metric coefficients on a two-sided endpoint
chart.
\end{definition}

\begin{theorem}[Optimal extension regularity]
\label{thm:main}
Consider the exact spatially flat, expanding $\Lambda+w$ McVittie spacetime
with $w>-1$ and $0<h_\infty^2<1/27$.  Let $\alpha\in(2,3)$ be the simple
limiting black-hole root and define
\begin{equation}
 p=\frac{3(1+w)h_\infty}{\bar\kappa},
 \qquad \bar\kappa=\frac{3-\alpha}{\alpha^2}.
 \label{eq:main-p}
\end{equation}
Fix a horizon-reaching ingoing radial null generator and a compact interval of
neighboring generators.
\begin{enumerate}
\item If $0<p<2$, no anchored local $C^2$ Lorentzian extension exists.
\item If $p=N+\vartheta$ is nonintegral, with $N\ge2$ and
$0<\vartheta<1$, an anchored local $C^{N,\vartheta}$ extension exists.  For
every $\vartheta<\vartheta'\le1$, no anchored
$C^{N,\vartheta'}$ extension exists.
\item If $p\in\mathbb N$ and $p\ge2$, an anchored local real-analytic
Lorentzian extension exists.
\end{enumerate}
The asserted extensions are local and may be chosen on arbitrarily small
collars of the anchored family.
\end{theorem}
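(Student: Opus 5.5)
Item 1 needs no new argument: it is the obstruction half of Proposition~\ref{prop:tidal-asymptotics}. By Eq.~\eqref{eq:pp-leading}, the parallelly propagated component $\mathcal R_{k\hat Ak\hat A}$ diverges like $Cs^{p-2}$ with $C>0$. In any anchored $C^2$ extension the geodesic, its affine endpoint and the transported frame continue with finite limits. Continuity of the extended Riemann tensor would then bound that component, which is a contradiction.

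The remaining two items rest on one structural fact, which I will establish first. In the Gaussian-null chart \eqref{eq:gaussian-null}, the coefficients $R$ and $\cF$ admit parameter-uniform polyhomogeneous expansions in $\rho$ with index set $E_p$ from Eq.~\eqref{eq:intro-index}, and no logarithms appear.
\begin{itemize}
\item \emph{Expansion of $W$.} I will iterate the regular integral equation \eqref{eq:W-equation} in the analytic variables $(z,z^p)$, treating $\Phi$ as analytic in $(X,y)$ with $y=z^p$. Every forced exponent $j+np$ with $n\ge1$ exceeds the indicial root $1$, so no resonance occurs. This gives $X=\sum a_{jn}(v)z^{j}z^{np}$, with each coefficient $C^k$ in $v$ and uniform remainders. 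Here I will use the uniform estimates of Appendix~\ref{app:asymptotics}, upgraded from $C^4$ to $C^k$ by further variational equations.
\item \emph{Transfer to $\rho$.} The expansion carries over to $J=X_v$, then to $\rho$ via Eq.~\eqref{eq:rho-integral}; integrating $\zeta^{j-1+np}$ yields no logarithm because $j+np\neq0$. Inverting gives $z(v,\rho)$, and hence $R$ and $\cF$.
\item \emph{First non-integer term.} Track the coefficient of the first non-integer power of $\rho$. Through Eq.~\eqref{eq:gn-warped-tidal} it is tied to $K(\alpha,p)\neq0$ in Eq.~\eqref{eq:intro-tidal}. It appears in $R$ at order $\rho^{p}$, with coefficient proportional to $c/(p-1)\ne0$, consistent with \eqref{eq:X-p23}.
\end{itemize}

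\emph{Existence in item 2.} For $p=N+\vartheta$ with $\vartheta\in(0,1)$, every term $\rho^{e}$ with $e\in E_p$ non-integral has $e\ge p$. Such a term therefore lies in $C^{N,\vartheta}([0,\epsilon))$, and the integer-power terms together with the remainder are smooth to any fixed order. So $R,\cF\in C^{N,\vartheta}(I_v\times[0,\epsilon))$. I will then replace $E_2$ by a higher-order reflection $E_M f(\rho)=\sum_{i=1}^{M+1}c_i f(-i\rho)$ with $M\ge N+1$. The coefficients $c_i$ solve the Vandermonde moment conditions $\sum_i c_i(-i)^k=1$ for $k\le M$, which match normal derivatives up to order $M$. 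This operator is bounded on big H\"older spaces, so it produces a two-sided $C^{N,\vartheta}$ metric. Lorentzian signature, the timelike field \eqref{eq:collar-timelike} and the gluing go through exactly as in Proposition~\ref{prop:c2-sufficiency}, since the orbit determinant stays $-1$.

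\emph{Sharpness in item 2.} Suppose $\widetilde g\in C^{N,\vartheta'}$ is an anchored extension with $\vartheta'>\vartheta$. Then the Christoffel symbols are $C^{N-1,\vartheta'}$ and the curvature is $C^{N-2,\vartheta'}$. The anchored geodesic and its parallel frame are $C^{N,\vartheta'}$ in the affine parameter, because they solve ODEs with $C^{N-1,\vartheta'}$ coefficients. Hence $\cT(\rho)=R_{kAkA}$ is $C^{N-2,\vartheta'}$ up to $\rho=0$. But by Eq.~\eqref{eq:intro-tidal}, $\partial_\rho^{N-2}\cT$ equals a polynomial plus $K'\rho^{\vartheta}+O(\rho^{1+\vartheta})$ with $K'\ne0$, and this is not $\vartheta'$-H\"older at $0$. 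The step I expect to need the most care is the nonvanishing of $K$. It is the first term of the $n=1$ harmonic, $\dot h\mathcal Q^2\sim s^{p-2}$. The integer-power corrections coming from the static geometry and from the conversion $s\leftrightarrow\rho$ must not cancel it; they cannot, because their exponents differ. This invariance argument, chart-free because the curve and frame are anchored, is the main obstacle.

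\emph{Item 3.} For $p\in\mathbb N$ with $p\ge2$, the term $z^p$ is an analytic function of $z$. Equation~\eqref{eq:W-equation} then becomes an analytic ODE $W_z=G(z,W)$ with analytic parameter $v$ (taking analytic initial data $W(v,0)=v$). Cauchy--Kovalevskaya (or Cauchy's ODE theorem) gives $W$ jointly analytic for $|z|$ small, on both signs of $z$. From this, $J$ and $J/z$ are analytic. The integral $\rho$ is analytic with $\rho_z\neq0$, so it inverts analytically to $z(v,\rho)$, and $R$ and $\cF$ are analytic on a two-sided collar. The physical side is defined to be $\rho>0$, and the real-analytic structure is the one given by the $(v,\rho)$ chart.
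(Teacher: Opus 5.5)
Your proposal is correct and follows the paper's proof essentially step for step. The shared steps are: the $p<2$ obstruction from Proposition~\ref{prop:tidal-asymptotics}; the log-free polyhomogeneous expansion of Proposition~\ref{prop:polyhom}, which gives one-sided $C^{N,\vartheta}$ coefficients; a finite Vandermonde reflection; sharpness from the nonzero $\rho^{\vartheta}$ term in $\partial_\rho^{N-2}\cT$, whose coefficient comes from $c/(p-1)\neq0$; and the analytic ODE for $W=X/z$ at integer $p$. Your only deviation is taking $M\ge N+1$ in the reflection instead of the paper's $M=N$ in Eq.~\eqref{eq:EN}, which is harmless because only the moment identities of order at most $N$ are used on a $C^{N,\vartheta}$ function.
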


\subsection{Polyhomogeneous Gaussian-null germ}

Rewrite Eq.~\eqref{eq:exact-characteristic} as
\begin{equation}
 (z\partial_z-1)X=\Phi(X,z^p)-X,
 \qquad z^{-1}X\big|_{z=0}=v.
 \label{eq:regular-singular-hierarchy}
\end{equation}
For fixed $p\ge2$, define the locally finite index set
\begin{equation}
 E_p=\{j+np:(j,n)\in\mathbb N_0^2,\ (j,n)\ne(0,0)\}.
 \label{eq:Ep}
\end{equation}
We write $f\in\cA^{E_p}$ when, for every cutoff $L\notin E_p$ and all finite
$a,b$, a finite sum $f_{<L}$ of terms with exponent below $L$ satisfies
\begin{equation}
 \sup_{v\in I_v}
 \left|\partial_v^a(z\partial_z)^b(f-f_{<L})\right|
 \le C_{a,b,L}z^L.
 \label{eq:conormal-remainder}
\end{equation}

\begin{proposition}[Characteristic expansion]
\label{prop:polyhom}
For every real $p\ge2$ and compact $I_v$, the solution of
Eq.~\eqref{eq:regular-singular-hierarchy} belongs to $\cA^{E_p}$.  Its
coefficients are uniquely determined by the analytic germ $\Phi$, and its
expansion contains no logarithms.  The conclusion propagates to
$J$, $\rho$, $z(v,\rho)$, $R$, and $\cF$, with ordinary transverse
derivatives lowering the remainder power by their order.
\end{proposition}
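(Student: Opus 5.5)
The plan is to build the expansion by iterating the regular-singular equation \eqref{eq:regular-singular-hierarchy} in the renormalized unknown $W=z^{-1}X$, whose boundary value $W(v,0)=v$ is the free amplitude, and then to promote a finite formal expansion to a conormal remainder estimate. Because $\Phi$ is analytic near the origin with $\Phi(0,0)=0$ and $\Phi_X(0,0)=1$, the right-hand side $\Phi(X,z^p)-X$ is a convergent double series $\sum_{k+n\ge2\text{ or }n\ge1}\phi_{kn}X^kz^{np}$ with $\phi_{10}=0$. Substituting an ansatz $X=\sum_{e\in E_p}X_e(v)z^e$ gives the recursion
\begin{equation}
 (e-1)X_e=\sum \phi_{kn}\,[X^k]_{e-np},
\end{equation}
where $[X^k]_{e'}$ is the coefficient of $z^{e'}$ in $X^k$. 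The indices on the right are strictly smaller than $e$: either $k\ge2$, so every factor has exponent at least $1$ and the sum of the others is positive, or $n\ge1$ and $p>0$. Since the set is locally finite, the recursion is a finite well-founded induction below any cutoff $L$.

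The key arithmetic point is that the operator $e-1$ is invertible on every forced exponent. The exponent $1$ is occupied only by $X_1=v$. Every other element of $E_p$ that can appear as the order of a forced term is either $j\ge2$ or $j+np$ with $n\ge1$, and the latter is $\ge p\ge2>1$. The only way to produce a logarithm would be a forced term at exponent exactly $1$. The exponents generated by the right-hand side are sums of at least two elements of $E_p\cap[1,\infty)$, or contain a factor $z^{np}$ with $np\ge2$; in either case they are at least $2$. Hence no resonance occurs, every coefficient is a polynomial in $v$ with coefficients fixed by the germ $\phi_{kn}$, and the expansion is log-free. At $p=2$ the exponents $2=j$ and $np$ coincide; this only merges terms and does not create a resonance, consistent with \eqref{eq:X-p2}.

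Next I would justify the remainder \eqref{eq:conormal-remainder}. Set $X=X_{<L}+z^{L}\,r$, with $L\notin E_p$ and $L>1$. The remainder then satisfies
\begin{equation}
 (z\partial_z+L-1)r=z^{\delta}\,G(v,z,r),
\end{equation}
with $\delta>0$ and $G$ analytic in $r$. Since $L-1>0$, this can be inverted by the integral operator $r\mapsto z^{-(L-1)}\int_0^z\zeta^{L-2}(\cdots)\,\dd\zeta$. That operator is bounded on $z$-uniform sup norms and commutes well with $z\partial_z$ and with $\partial_v$ applied to the equation. A contraction on a small collar $[0,\epsilon)$, followed by induction on $(a,b)$ differentiating the equation, gives the estimates; the identification with the actual solution uses the uniqueness of solutions with prescribed amplitude $v$, established in Appendix~\ref{app:asymptotics}.

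I expect the main obstacle to be the uniformity in $v$ for arbitrary $a$. Appendix~\ref{app:asymptotics} controls $v$ only as a $C^4$ coordinate. I would first try to rebuild the dependence on $v$ through the fixed-point formulation above, where $v$ enters analytically through $X_1=v$, so that all $\partial_v^a$ bounds follow uniformly on compact $I_v$.

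Propagation to the other quantities is then algebra of conormal classes. First, $J=X_v\in\cA^{E_p}$. Next, $J/z$ has expansion $1+\cA^{E_p}$, so integrating $\zeta^{-1}J$ keeps exponents in $E_p$: the term $z^e$ goes to $z^e/e$, and $e>0$, so no logarithm appears. Inverting $\rho(v,z)$ through a Lagrange-type iteration uses the fact that $E_p$ is closed under addition, so $z(v,\rho)\in\rho\,(\bar\kappa+\cA^{E_p})$. Finally $R=\alpha+X\circ z$, and $\cF=J^2/S-2\rho_v$ is composed with the analytic function $1/S$. Composition of $\cA^{E_p}$ germs whose leading exponent is $1$ stays in $\cA^{E_p}$ because of additive closure. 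Converting $(z\partial_z)^b$ into ordinary $\partial_\rho^b$ costs $\rho^{-b}$, which gives the stated loss of powers for transverse derivatives.
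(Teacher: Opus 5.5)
Your proposal is correct and is essentially the paper's argument. The paper's proof and Appendix~\ref{app:asymptotics} use the same ingredients: the recursion with divisor $e-1$, nonresonance because every forced exponent is at least $2$, the regular-singular inverse applied to a $z^{L}$-weighted remainder with a contraction on a small collar, a triangular system for $\partial_v^a(z\partial_z)^b$, and propagation by additive closure of $E_p$. Two of your choices usefully tighten points the paper only asserts: identifying the fixed point with the actual solution through uniqueness for the $W$ initial-value problem, and obtaining all $v$-derivatives from the parameter dependence of that fixed point.

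The one step you state without proof is $J/z\in 1+\cA^{E_p}$, and it is load-bearing. From $J\in\cA^{E_p}$ alone you only get exponents of the form $e-1$ with $e\in E_p$. Inversion does not preserve $\cA^{E_p}$ in general: for irrational $p$, $\rho=z+z^p$ inverts to $z=\rho-\rho^p+p\rho^{2p-1}+\cdots$, and $2p-1\notin E_p$.

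The claim does hold here. A coefficient $X_e$ at an exponent admitting only representations $e=np$ receives contributions solely from the constants $\phi_{kn}$ and from coefficients of the same kind. It is therefore independent of $v$ and is annihilated by $\partial_v$. Hence every exponent of $J$ has the form $j+np$ with $j\ge1$, which is exactly what your inversion $z\in\rho(\bar\kappa+\cA^{E_p})$ requires. The paper compresses this point into ``the inverse map is obtained recursively from its nonzero linear coefficient,'' so you should supply this short argument explicitly.
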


\begin{proof}
The first estimate follows from the integral equation for $W=X/z$ and gives
$X-vz=\order(z^2)$.  Suppose all terms below
$\mu\in E_p$ have been removed.  Analyticity of $\Phi$ and closure of $E_p$
under addition identify the leading residual as $f_\mu(v)z^\mu$.  Every forced
exponent satisfies $\mu>1$, and the regular-singular inverse is
\begin{equation}
 z\int_0^z\zeta^{-2}f_\mu(v)\zeta^\mu\,\dd\zeta
 =\frac{f_\mu(v)}{\mu-1}z^\mu.
 \label{eq:volterra-inverse}
\end{equation}
Subtracting this term raises the residual weight.  On a smaller collar the
remaining Volterra map is a contraction in the norm
$\sup z^{-L}|\cdot|$; differentiation in $v$ and $z\partial_z$ gives a
triangular family of the same estimates.  Local finiteness of $E_p$ makes the
induction finite below each cutoff.

The only zero of the divisor $\mu-1$ is the free mode $vz$.  Since no forced
term reaches that exponent, collisions between $j+np$ values add coefficients
without generating logarithms.  Differentiation in $v$ gives $J$,
Eq.~\eqref{eq:rho-integral} preserves the exponents, and the inverse map is
obtained recursively from its nonzero linear coefficient.  Multiplication and
analytic composition preserve $E_p$, proving the statement for $R$ and
$\cF$.  Ordinary derivatives follow from
$\partial_z=z^{-1}(z\partial_z)$.
\end{proof}

For nonintegral $p=N+\vartheta$, every exponent at most $N$ is an integer and
the first nonintegral contribution has power $\rho^p$.  Proposition
\ref{prop:polyhom} therefore gives
\begin{equation}
 R,\cF\in C^{N,\vartheta}
 \bigl(I_v\times[0,\epsilon)\bigr).
 \label{eq:one-sided-holder}
\end{equation}

\subsection{Finite H\"older extension}

For $f\in C^{N,\vartheta}$ on the physical half-collar, set
\begin{equation}
 E_Nf(v,\rho)=
 \begin{cases}
 f(v,\rho),&\rho\ge0,\\[2mm]
 \displaystyle\sum_{j=1}^{N+1}a_jf(v,-j\rho),&\rho<0,
 \end{cases}
 \label{eq:EN}
\end{equation}
Here
\begin{equation}
 a_j=(-1)^{j-1}j\binom{N+2}{j+1}.
 \label{eq:EN-coefficients}
\end{equation}
These are the Lagrange weights which extrapolate a polynomial of
degree at most $N$ from the nodes $-1,\ldots,-N-1$ to $1$.  Hence
\begin{equation}
 \sum_{j=1}^{N+1}a_j(-j)^b=1,
 \qquad 0\le b\le N.
 \label{eq:EN-moments}
\end{equation}
All mixed jets of total order at most $N$ match across $\rho=0$.  On either
side the order-$N$ H\"older seminorm is multiplied by at most
\begin{equation}
 C_{N,\vartheta}=
 \sum_{j=1}^{N+1}|a_j|j^{N+\vartheta}.
 \label{eq:EN-bound}
\end{equation}
For points on opposite sides, subtract the common order-$N$ boundary jet and
apply the one-sided estimates.  This proves boundedness of $E_N$ on the joint
big $C^{N,\vartheta}$ space; Appendix~\ref{app:holder-extension} supplies the
mixed-variable estimate explicitly.

Applying $E_N$ to $R$ and $\cF$ in Eq.~\eqref{eq:gaussian-null} gives
\begin{equation}
 \widetilde g=2\,\dd v\dd\rho+(E_N\cF)\dd v^2
 +(E_NR)^2\dd\Omega^2.
 \label{eq:holder-metric}
\end{equation}
For negative width smaller than $\epsilon/(N+1)$ all sampled points remain in
the physical collar.  After a further shrink, $E_NR>\alpha/2$, while the
orbit determinant remains $-1$.  The nested-collar gluing used in
Sec.~\ref{sec:extension} then yields an anchored
$C^{N,\vartheta}$ Lorentzian extension.

\subsection{Integer exponents}

Let $p\ge2$ be an integer.  The exact equation for $W=X/z$ is
\begin{equation}
 W_z=\frac{\Phi(zW,z^p)-zW}{z^2}.
 \label{eq:analytic-W}
\end{equation}
The numerator vanishes to second order at $z=0$, and the quotient is jointly
analytic in $(v,z,W)$.  Analytic dependence for ordinary differential
equations gives $W(v,z)$ analytic near the endpoint.  Since
$J/z=W_v$ is analytic, Eq.~\eqref{eq:rho-integral} gives an analytic
$\rho(v,z)$ with $\rho_z(v,0)=1/\bar\kappa$.  The analytic inverse-function
theorem then yields $z(v,\rho)$ and jointly analytic functions $R$ and $\cF$.
Their convergent endpoint series define the same coefficients for positive
and negative $\rho$.  On a sufficiently small two-sided collar they therefore
give a real-analytic Lorentz metric.  Integer collisions in $E_p$ merely
combine ordinary Taylor coefficients, explaining the analytic islands.

\subsection{Invariant sharpness}

Along $v=\mathrm{const}$ take $k=\partial_\rho$ and a parallelly propagated
unit angular vector.  The warped-product identity defines the tidal scalar
\begin{equation}
 \cT(\rho)=R_{kAkA}
 =\frac12\operatorname{Ric}(k,k)
 =-\frac{R_{\rho\rho}}{R}.
 \label{eq:tidal-scalar}
\end{equation}
The first noninteger radius coefficient obtained from
Proposition~\ref{prop:polyhom} is
\begin{equation}
 R=R_{\rm int}
 +\frac{c(\alpha)\bar\kappa^p}{p-1}\rho^p
 +\order_{\mathrm b}(\rho^{p+1}),
 \label{eq:R-leading-noninteger}
\end{equation}
where $R_{\rm int}$ contains the finitely many relevant integer powers and
$c(\alpha)$ is given in Eq.~\eqref{eq:q-c}.  Thus
\begin{align}
 \cT(\rho)&=P(\rho)+K(\alpha,p)\rho^{p-2}
 +\order_{\mathrm b}(\rho^{p-1}),
 \label{eq:T-expansion}\\
 K(\alpha,p)&=-\frac{p c(\alpha)\bar\kappa^p}{\alpha}
 =\frac{2p(\alpha-2)\bar\kappa^{p-1}}{\alpha^2}>0.
 \label{eq:K-coefficient}
\end{align}
Here $P$ is a finite Taylor polynomial.  The second equality in
Eq.~\eqref{eq:K-coefficient} follows from the root relations and also follows
directly from $-\mathcal Q^2\sqrt S\,\dot h$ with
$\mathcal Q=-J^{-1}$ in the present affine normalization.

For $p=N+\vartheta$ and $m=N-2$, Eq.~\eqref{eq:T-expansion} gives
\begin{equation}
 \partial_\rho^m\cT
 =\partial_\rho^mP+
 K\frac{\Gamma(p-1)}{\Gamma(\vartheta+1)}\rho^\vartheta
 +o(\rho^\vartheta).
 \label{eq:T-holder-obstruction}
\end{equation}
The nonzero power $\rho^\vartheta$ belongs to no
$C^{0,\vartheta'}$ with $\vartheta'>\vartheta$.  If the metric admitted an
anchored $C^{N,\vartheta'}$ extension, its Riemann tensor would be
$C^{N-2,\vartheta'}$.  The geodesic and parallel-frame transport equations,
whose coefficients are the Christoffel symbols, preserve the regularity
required for the contraction \eqref{eq:tidal-scalar}.  Its restriction to the
anchored generator would then lie in $C^{N-2,\vartheta'}$, contradicting
Eq.~\eqref{eq:T-holder-obstruction}.  A smooth endpoint change has
$\rho=a\widehat\rho+\order(\widehat\rho^2)$ with $a\ne0$ and only multiplies
the leading coefficient by a nonzero factor.  The obstruction is therefore
independent of the endpoint chart and affine normalization.

\begin{proof}[Proof of Theorem~\ref{thm:main}]
For $p<2$, Proposition~\ref{prop:tidal-asymptotics} gives an unbounded p.p.
Riemann component at a finite-affine endpoint, which is incompatible with a
$C^2$ extension.  For nonintegral $p\ge2$, Eqs.~\eqref{eq:one-sided-holder}
and \eqref{eq:holder-metric} construct the attained
$C^{N,\vartheta}$ collar, while Eq.~\eqref{eq:T-holder-obstruction} excludes
all higher H\"older exponents.  Equation~\eqref{eq:analytic-W} gives the
real-analytic collar for every integer $p\ge2$.
\end{proof}

\begin{corollary}[Integer differentiability classes]
\label{cor:ck-classification}
For every integer $k\ge2$, an anchored local geometric $C^k$ extension exists
if and only if
\begin{equation}
 p\ge k
 \qquad\text{or}\qquad
 p\in\mathbb N,\quad p\ge2.
 \label{eq:ck-classification}
\end{equation}
\end{corollary}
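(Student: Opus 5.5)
The corollary should follow from Theorem~\ref{thm:main} by sorting the values of $p$ into cases. Throughout, $C^k$ means the metric coefficients lie in $C^k$ of an endpoint chart. I use two inclusions: $C^{N,\vartheta}\subset C^k$ whenever $k\le N$, and $C^k\subset C^{k-1,1}$ locally. Fix an integer $k\ge2$.

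\emph{Sufficiency.} There are three cases.
\begin{enumerate}
\item If $p\in\mathbb N$ and $p\ge2$, part~3 of Theorem~\ref{thm:main} gives a real-analytic anchored extension. Real-analytic is in particular $C^k$ for every $k$.
\item If $p\ge k$ and $p$ is nonintegral, write $p=N+\vartheta$ with $N=\lfloor p\rfloor\ge k\ge2$. Part~2 gives an anchored $C^{N,\vartheta}$ extension, which is $C^k$ because $k\le N$.
\item If $p\ge k$ and $p$ is an integer, this is already covered by the first case.
\end{enumerate}

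\emph{Necessity.} Suppose an anchored $C^k$ extension exists but $p<k$ and $p\notin\mathbb N_{\ge2}$. I derive a contradiction in each range.
\begin{enumerate}
\item If $0<p<2$, the extension is in particular $C^2$, which part~1 excludes.
\item Otherwise $2<p<k$ and $p$ is nonintegral. Write $p=N+\vartheta$ with $2\le N\le k-1$ and $0<\vartheta<1$.
\begin{itemize}
\item If $N\le k-2$, a $C^k$ metric is $C^{N,1}$ near the endpoint, since $N+1\le k$ and $C^{N+1}$ functions have locally Lipschitz $N$-th derivatives on a compact neighborhood. Part~2 with $\vartheta'=1>\vartheta$ forbids this.
\item If $N=k-1$, a $C^k$ metric is $C^{k-1,1}=C^{N,1}$, and the same exclusion applies.
\end{itemize}
\end{enumerate}
In every subcase the nonexistence statement of part~2 with $\vartheta'=1$ gives the contradiction.

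The only point needing care is that Definition~\ref{def:holder-extension} permits $\vartheta'=1$ and that part~2 covers $\vartheta'\le 1$. It also matters that the Lipschitz bound holds uniformly on a compact neighborhood of the anchored endpoint. Local Lipschitz control is all the invariant argument around Eq.~\eqref{eq:T-holder-obstruction} uses, since it only needs $\partial_\rho^{N-2}\cT$ to be $C^{0,1}$ along the generator. No further regularity is involved, so this step is immediate.

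Combining the two directions gives Eq.~\eqref{eq:ck-classification}.
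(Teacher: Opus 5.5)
Your argument is correct and is the same bookkeeping the paper intends. The paper states the corollary without a separate proof, as a direct read-off from Theorem~\ref{thm:main}: sufficiency comes from parts~2--3, and necessity from part~1 together with part~2 at $\vartheta'=1$, using $C^k\subset C^{N,1}$ locally. Your split into $N\le k-2$ and $N=k-1$ is unnecessary, since $N\le k-1$ already gives $C^k\subset C^{N+1}\subset C^{N,1}$ in one step.
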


\section{Matter limits and physical consequences}
\label{sec:consequences}

\subsection{Critical Einstein tensor and perfect-fluid open sides}

For the spherically symmetric Gaussian-null metric
\begin{equation}
 g=2\,\dd v\dd\rho+\cF\dd v^2+R^2\dd\Omega^2,
 \label{eq:gn-matter-metric}
\end{equation}
the radial Einstein equation contains
\begin{equation}
 G_{\rho\rho}=-\frac{2R_{\rho\rho}}{R}.
 \label{eq:G-rhorho}
\end{equation}
Substitution of the boundary two-jet
\eqref{eq:R-boundary-jet}--\eqref{eq:F-boundary-jet} gives, with
$\Lambda=3h_\infty^2$,
\begin{equation}
 G^a{}_b\big|_{\cH}
 =-\Lambda\delta^a{}_b+\nu
 (\partial_v)^a(\dd\rho)_b,
 \label{eq:G-boundary}
\end{equation}
where
\begin{equation}
 \nu=
 \begin{cases}
 \displaystyle\frac{8(3-\alpha)(\alpha-2)}{\alpha^4}>0,&p=2,\\[2mm]
 0,&p>2.
 \end{cases}
 \label{eq:nu-boundary}
\end{equation}
At the critical exponent the second term is rank one, nilpotent, and nonzero.
The boundary Einstein endomorphism is therefore of type II.  A perfect-fluid
endomorphism defined by a finite unit timelike velocity is diagonalizable,
apart from the vacuum-energy degeneracy where it is proportional to the
identity.  The type-II value at $p=2$ has no such boundary decomposition.
This affine boundary limit retains the finite effect of the decaying matter
after its boost by the ingoing null frame.

There is also a two-sided completion with exact perfect-fluid regions away
from the join.  Take a second McVittie half-neighborhood with the same
parameters and use its characteristic amplitude $\widehat v$ and affine
coordinate $\widehat\rho$.  The identification
\begin{equation}
 (\widehat v,\widehat\rho)=(-v,-\rho)
 \label{eq:paired-identification}
\end{equation}
preserves $2\dd v\dd\rho$ and matches every Gaussian-null jet of total order
at most two.  Each open side satisfies the original McVittie
Einstein--perfect-fluid equations.  The joined $C^2$ metric has a continuous
Einstein tensor, so $T_{ab}=G_{ab}/(8\pi)$ is continuous, distributionally
conserved, and carries no delta-function curvature or null shell.  Appendix
\ref{app:perfect-fluid} gives the matching and weak Bianchi argument.

The natural future tangent of the ingoing family is $-\partial_\rho$ on the
first copy and $-\partial_{\widehat\rho}$ on the second.  Under
Eq.~\eqref{eq:paired-identification} these tangents have opposite images.  The
collar admits the continuous time orientation generated by
Eq.~\eqref{eq:collar-timelike}; that orientation agrees with the original
expanding orientation on one of the two copies.  A join preserving both
expanding orientations would require a different identification.

\subsection{Equation-of-state form of the hierarchy}

At fixed limiting root, the causal resonance and the $C^2$ threshold occur at
\begin{equation}
 w_1=\frac{\bar\kappa}{3h_\infty}-1,
 \qquad
 w_2=\frac{2\bar\kappa}{3h_\infty}-1.
 \label{eq:w-thresholds}
\end{equation}
The first is the $p=1$ borderline in the causal classification of
Ref.~\cite{DaSilvaFontaniniGuariento2013}; the second is the boundary between
$C^2$ inextendibility and the extension hierarchy of
Theorem~\ref{thm:main}.  More generally, the line $p=k$ for integer $k\ge2$
is
\begin{equation}
 w_k=\frac{k\bar\kappa}{3h_\infty}-1.
 \label{eq:wk-threshold}
\end{equation}
Nonintegral points between these lines carry their exact H\"older exponent,
whereas the equality points with integer $p$ lie on analytic islands.

For $h_\infty=0.1$, one finds
$\alpha\simeq2.091488$ and
\begin{equation}
 w_1\simeq-0.307694,
 \qquad w_2\simeq0.384612.
 \label{eq:default-thresholds}
\end{equation}
Dust and radiation then have $p<2$, while a stiff fluid has $p>2$.
Increasing $w$ accelerates cosmological dilution without changing the
limiting surface gravity, moving the endpoint upward through the regularity
hierarchy.

\begin{figure*}[t]
 \includegraphics[width=0.88\textwidth]{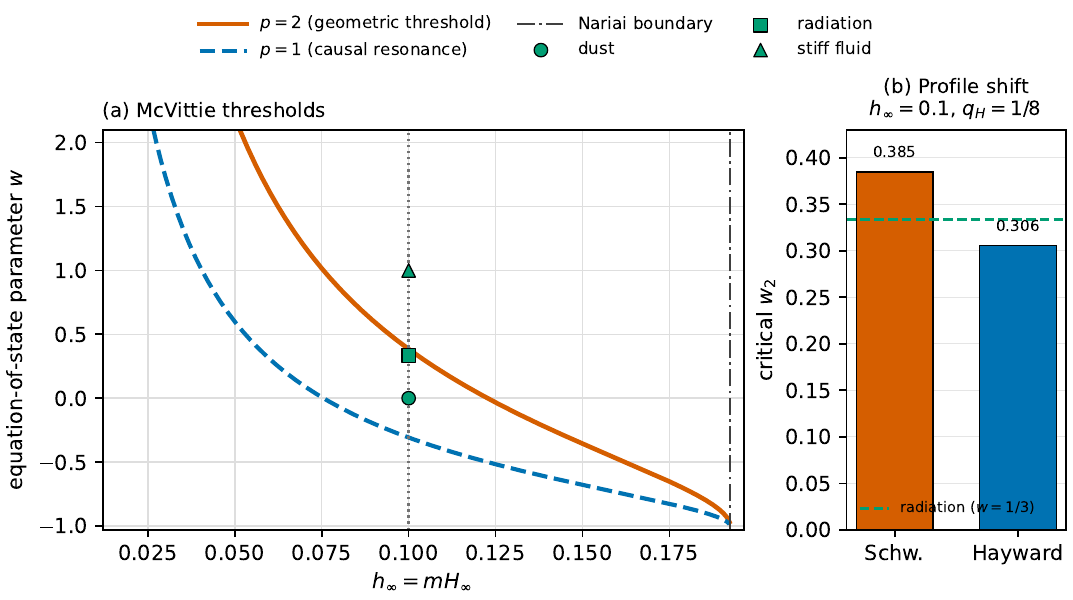}
 \caption{Decay thresholds at the limiting McVittie black-hole horizon.
 Panel (a) shows the causal resonance $p=1$ and the geometric threshold
 $p=2$ throughout the simple-root domain
 $0<h_\infty<1/\sqrt{27}$.  The vertical line marks $h_\infty=0.1$, with
 dust, radiation, and stiff-fluid values indicated.  Panel (b) compares the
 $p=2$ equation-of-state threshold for the Schwarzschild and Hayward limiting
 profiles at $h_\infty=0.1$ and $q_H=1/8$.}
 \label{fig:thresholds}
\end{figure*}

A finite collection of noninteracting fluids with positive asymptotic
densities is controlled by the slowest nonvanishing component.  Equal decay
exponents add with a positive coefficient, and faster components enter only at
higher orders.  Appendix~\ref{app:profiles} derives this rule and also records
the Hayward comparison shown in Fig.~\ref{fig:thresholds}.  The latter
illustrates how changing the limiting root slope moves the same cosmological
tail across the $p=2$ threshold.

\subsection{Integrated tidal strength and Jacobi fields}

The leading transverse tidal field has magnitude $s^{p-2}$.  Its Kr\'olak and
Tipler integrals behave as
\begin{align}
 I_K(s)&=\int_s^{s_*}u^{p-2}\,\dd u,
 \label{eq:krolak-integral}\\
 I_T(s)&=\int_s^{s_*}\dd u\int_u^{s_*}v^{p-2}\,\dd v.
 \label{eq:tipler-integral}
\end{align}
The first diverges for $0<p<1$ and logarithmically at $p=1$, while the second
is finite for all $p>0$ \cite{Tipler1977,ClarkeKrolak1985,Krolak1986,Ori2000}.
The angular Jacobi equation has the form
\begin{equation}
 J''+Cs^{p-2}J=\text{higher-order terms}.
 \label{eq:jacobi-equation}
\end{equation}
For $p\ne1$, a solution with $J(0)=J_0\ne0$ satisfies
\begin{equation}
 J(s)=J_0+J_1s-\frac{CJ_0}{p(p-1)}s^p+o(s^p),
 \label{eq:jacobi-expansion}
\end{equation}
and at $p=1$ the last term is $-CJ_0s\log s$.  Two independent angular
Jacobi fields retain finite nonzero lengths and a finite area element.  The
range $1<p<2$ gives a particularly clear distinction: accumulated tidal
distortion is finite, while the instantaneous curvature still excludes a
$C^2$ completion.

\begin{table*}[t]
\caption{Affine-endpoint classification for the exact $\Lambda+w$ McVittie
family.  Here $N=\lfloor p\rfloor$, $\vartheta=p-N$, and K/T denote the
Kr\'olak and Tipler integrals.}
\label{tab:classification}
\begin{ruledtabular}
\small
\setlength{\tabcolsep}{2.5pt}
\begin{tabular}{@{}lcccc@{}}
$p$ & Characteristic & Regularity & Tidal term & Strength\\
\hline
$0<p<1$ & subcritical & no $C^2$ & $s^{p-2}$ diverges & K diverges; T finite\\
$p=1$ & $s\log s$ & no $C^2$ & $s^{-1}$ diverges & K logarithmic; T finite\\
$1<p<2$ & supercritical & no $C^2$ & $s^{p-2}$ diverges & K/T finite\\
$p=2$ & quadratic & analytic & finite; type II & K/T finite\\
nonint. $p>2$ & polyhomogeneous & $C^{N,\vartheta}$ & $s^{p-2}$ tail & K/T finite\\
integer $p\ge3$ & Taylor & analytic & smooth tail & K/T finite\\
\end{tabular}
\end{ruledtabular}
\end{table*}

\section{Discussion}
\label{sec:discussion}

The future McVittie black-hole horizon is governed by two exponential scales.
The barotropic component decays at rate $3(1+w)H_\infty$, and the limiting
simple root converts cosmological time to affine distance at rate $\kappa$.
Their ratio $p$ controls the transverse powers of the compactified metric.
The causal resonance occurs at $p=1$, where the radial characteristic develops
an $s\log s$ term.  Curvature introduces two affine derivatives and places the
geometric extension threshold at $p=2$.

The classification follows from two complementary arguments.  The
parallelly propagated null frame gives an invariant obstruction attached to a
specified finite-affine endpoint.  The parameterized characteristic family
supplies the mixed estimates needed to construct a full neighborhood of that
endpoint.  Together they prove a $C^2$ extension precisely for $p\ge2$ and
identify the optimal class at every larger nonintegral exponent.  The
integer values form analytic islands because the exact coth tail and the
characteristic equation become ordinary analytic functions of the affine
horizon variable.  This arithmetic distinction would be invisible in a
classification based only on curvature boundedness.

The exponent $p-2$ already appears in the late-time analysis of Kaloper,
Kleban, and Martin \cite{KaloperKlebanMartin2010}.  The present result assigns
it a tensorial affine meaning, proves the required parameter-uniform
Gaussian-null estimates, and constructs the attached Lorentzian collar.
Lake and Abdelqader's exact $\Lambda$CDM extension
\cite{LakeAbdelqader2011} and the causal theorem of da Silva, Fontanini, and
Guariento \cite{DaSilvaFontaniniGuariento2013} provide the closest global and
causal comparisons.  Nolan's radial-null endpoint theorem
\cite{Nolan2014,Nolan2026} supplies the global event-horizon bridge used here,
while the finite timelike p.p. behavior found in Ref.~\cite{Nolan2026}
reflects the different boost carried by timelike and radial-null frames.

At the critical value $p=2$, the geometric extension is analytic even though
the affine boundary Einstein tensor has a type-II nilpotent part.  Two exact
McVittie perfect-fluid regions can be joined at $C^2$ regularity, with a
continuous conserved stress tensor and no surface layer.  Their natural
expanding time orientations meet with opposite signs in this construction.
The boundary algebraic type, the equations on the two open sides, and the
orientation of the join are therefore three distinct pieces of geometric
information.

The classification is local and anchored.  It leaves $C^0$ and $C^1$
extensions, uniqueness, maximality, and dynamically selected data on the
attached side open.  These questions involve low-regularity causal structure
and characteristic evolution beyond the endpoint.  Within the classical
curvature range $k\ge2$, Theorem~\ref{thm:main} and
Corollary~\ref{cor:ck-classification} give the complete hierarchy for the exact
$\Lambda+w$ McVittie family.

The mechanism extends naturally to other asymptotically stationary horizons.
An exterior mode $e^{-\lambda t}$ is converted by redshift into the affine
power $\rho^{\lambda/\kappa}$; collisions, logarithms, and complex frequencies
then determine the endpoint class.  McVittie isolates this principle in an
exact nonlinear spacetime and shows that the regularity of a cosmological
black-hole horizon records both cosmological dilution and local horizon
redshift.

\appendix
\section{McVittie geometry and exact background}
\label{app:geometry}

For reference, the nonangular part of Eq.~\eqref{eq:mcvittie-areal} and its inverse are
\begin{align}
 g_{AB}&=\begin{pmatrix}
 -f & -hx/\sqrt S\\
 -hx/\sqrt S & 1/S
 \end{pmatrix},\nonumber\\
 g^{AB}&=\begin{pmatrix}
 -1/S & -hx/\sqrt S\\
 -hx/\sqrt S & f
 \end{pmatrix},
 \label{eq:orbit-metric-inverse}
\end{align}
where $A,B\in\{\tau,x\}$.  Both determinants equal $-1$.  Since $x$ is the dimensionless areal radius,
\begin{equation}
 g^{ab}\nabla_ax\nabla_bx=f,
 \label{eq:gradient-x}
\end{equation}
which gives Eq.~\eqref{eq:f-infinity} in the asymptotic limit.  The transformation from Eq.~\eqref{eq:mcvittie-isotropic} follows from
\begin{align}
 \sqrt{1-\frac{2m}{R}}&=\frac{1-\mu}{1+\mu},\nonumber\\
 \dd R&=HR\sqrt{1-\frac{2m}{R}}\,\dd t
 +a(1-\mu^2)\,\dd r.
 \label{eq:isotropic-to-areal}
\end{align}

Let $u^a=S^{-1/2}(\partial_\tau)^a$ in the isotropic chart, transported to the areal chart.  Direct contraction of the Einstein tensor gives
\begin{align}
 G_{ab}u^au^b&=3h^2,\label{eq:Einstein-rho}\\
 G_{ab}e_{(i)}^ae_{(j)}^b
 &=\left(-3h^2-\frac{2\dot h}{\sqrt S}\right)\delta_{ij},
 \label{eq:Einstein-pressure}\\
 G_{ab}u^ae_{(i)}^b&=0.
 \label{eq:Einstein-flux}
\end{align}
Restoring the factor $m^{-2}$ converts these expressions to Eq.~\eqref{eq:mcvittie-fluid}.  The equality of the three spatial eigenvalues and Eq.~\eqref{eq:Einstein-flux} establish the perfect-fluid form directly from the geometry.  The Friedmann equations enter separately in the evolution of the cosmological source.

For a flat FLRW background containing $\Lambda=3H_\infty^2$ and a barotropic component, the conservation and Friedmann equations are
\begin{equation}
 \dot\rho_w+3H(1+w)\rho_w=0,
 \qquad
 H^2=H_\infty^2+\frac{8\pi}{3}\rho_w.
 \label{eq:friedmann-system}
\end{equation}
They imply
\begin{equation}
 \frac{\dd H}{\dd t}=-4\pi(1+w)\rho_w
 =-\frac{3}{2}(1+w)(H^2-H_\infty^2).
 \label{eq:raychaudhuri-dimensional}
\end{equation}
Integrating with a big-bang origin at $t=0$ gives Eqs.~\eqref{eq:scale-factor} and \eqref{eq:hubble-exact}.  The identity
\begin{equation}
 \coth z-1=\frac{2e^{-2z}}{1-e^{-2z}}
 \label{eq:coth-identity}
\end{equation}
then yields Eqs.~\eqref{eq:ydef}--\eqref{eq:coth-tail} and shows that the tail is differentiable term by term to every finite order.

Finally, multiplying $f_\infty(\alpha)=0$ by $\alpha$ gives
\begin{equation}
 h_\infty^2\alpha^3-\alpha+2=0.
 \label{eq:root-polynomial}
\end{equation}
Its positive stationary point occurs at $\alpha=1/(\sqrt3h_\infty)$ and is a local minimum.  Under condition \eqref{eq:subnariai} this minimum is negative, producing two positive simple zeros.  The smaller zero lies in $(2,3)$, and differentiation followed by Eq.~\eqref{eq:alpha-relation} gives Eq.~\eqref{eq:kappa-bar}.

\section{Affine radial null frame and curvature components}
\label{app:null-frame}

Suppressing the angular directions, the affine Lagrangian is
\begin{equation}
 2L=-f\dot\tau_\ell^2-\frac{2hx}{\sqrt S}\dot\tau_\ell\dot x_\ell
 +\frac{1}{S}\dot x_\ell^2,
 \label{eq:null-lagrangian}
\end{equation}
where a dot with subscript $\ell$ denotes $\dd/\dd\ell$.  The null constraint factorizes to give
\begin{equation}
 \frac{\dd x}{\dd\tau}=\sqrt S(hx\pm\sqrt S).
 \label{eq:both-null-slopes}
\end{equation}
The minus sign is Eq.~\eqref{eq:ingoing-null-slope}; the plus sign gives the outgoing family.  Substitution of the ingoing branch into the Euler--Lagrange equation for $\tau$ gives Eq.~\eqref{eq:Q-equation}.  Conversely, Eqs.~\eqref{eq:ingoing-null-slope} and \eqref{eq:Q-equation}, together with the null constraint, make both Euler--Lagrange residuals vanish.  This supplies a closed first-order affine system for $(\tau,x,\mathcal Q)$.

The vectors in Eqs.~\eqref{eq:k-vector}, \eqref{eq:n-vector}, and \eqref{eq:angular-frame} obey
\begin{align}
 k^2=n^2&=0,& k\cdot n&=-1,\nonumber\\
 e_{\hat A}\cdot e_{\hat B}&=\delta_{AB},&
 k\cdot e_{\hat A}=n\cdot e_{\hat A}&=0.
 \label{eq:null-frame-normalization}
\end{align}
The orbit connection and Eq.~\eqref{eq:Q-equation} give $\nabla_kk=\nabla_kn=0$.  For the angular frame, the only required warped-product connection is
\begin{equation}
 \nabla_k\partial_{\hat A}=\frac{k(x)}{x}\partial_{\hat A},
 \label{eq:warp-connection}
\end{equation}
which is canceled by differentiating the factor $x^{-1}$ in Eq.~\eqref{eq:angular-frame}.  Hence the complete frame is parallelly propagated.

For a metric $g=q_{AB}\dd y^A\dd y^B+x^2\gamma_{ij}\dd z^i\dd z^j$, the mixed orbit--sphere curvature is
\begin{equation}
 R_{AiBj}=-x(\nabla_A\nabla_Bx)\gamma_{ij}.
 \label{eq:warped-riemann}
\end{equation}
Contracting twice with $k$ and using $\nabla_kk=0$ yields Eq.~\eqref{eq:warped-identity}.  Evaluation of $\dd^2x/\dd\ell^2$ with the affine system gives
\begin{equation}
 -\frac{1}{x}\frac{\dd^2x}{\dd\ell^2}
 =-\mathcal Q^2\sqrt S\,\dot h,
 \label{eq:warped-evaluation}
\end{equation}
independently reproducing Eq.~\eqref{eq:RkAkB}.

The orbit curvature, mixed warped curvature, and intrinsic curvature of the symmetry spheres generate all remaining contractions.  In the ordered null frame $(k,n,e_{\hat\theta},e_{\hat\phi})$, the algebraically independent nonzero entries are precisely Eqs.~\eqref{eq:Rknkn}--\eqref{eq:Rangular}; all others follow from
\begin{equation}
 R_{IJKL}=-R_{JIKL}=-R_{IJLK}=R_{KLIJ}
 \label{eq:riemann-symmetries}
\end{equation}
and spherical interchange of $\hat\theta$ and $\hat\phi$.  This list also shows that Eq.~\eqref{eq:RkAkB} is the only component enhanced by $\mathcal Q^2\dot h$; $R_{nAnB}$ is suppressed by the inverse boost, and the remaining components have finite limits.

\section{Uniform characteristic and endpoint estimates}
\label{app:asymptotics}

This appendix supplies the parameter estimates used in
Sec.~\ref{sec:extension}.  Fix a sufficiently late slice $\tau=\tau_0$ and a
compact interval $I_u$ of ingoing initial radii around the anchored generator.
In a small stable tube about $(\infty,\alpha)$,
\begin{equation}
 X_\tau=-\bar\kappa X+\order(X^2+y),
 \qquad y=e^{-\bar\lambda\tau}.
 \label{eq:app-X-stable}
\end{equation}
An exponential barrier first yields
$X=\order(e^{-(\bar\kappa-\eta)\tau})$ for any fixed small $\eta>0$.
Variation of constants and the integrability of the nonlinear remainder then
improve this to
\begin{equation}
 X=\order(e^{-\bar\kappa\tau})+
 \order(e^{-\bar\lambda\tau})
 =\order(e^{-\bar\kappa\tau})
 \label{eq:app-X-decay}
\end{equation}
uniformly on $I_u$ when $p\ge2$.

Let $J_a=\partial_u^a x$.  Differentiating the scalar flow gives
\begin{align}
 \dot J_1={}&F_xJ_1,\nonumber\\
 \dot J_2={}&F_xJ_2+F_{xx}J_1^2,\nonumber\\
 \dot J_3={}&F_xJ_3+3F_{xx}J_1J_2+F_{xxx}J_1^3,\nonumber\\
 \dot J_4={}&F_xJ_4+4F_{xx}J_1J_3+3F_{xx}J_2^2\nonumber\\
 &+6F_{xxx}J_1^2J_2+F_{xxxx}J_1^4.
 \label{eq:app-variations}
\end{align}
Here $F$ is the algebraic velocity in Eq.~\eqref{eq:characteristic-F},
evaluated at $(h(\tau),x(\tau,u))$.  Because
$F_x+\bar\kappa=\order(X+y)$ is integrable, the first equation gives
\begin{equation}
 \lim_{\tau\to\infty}e^{\bar\kappa\tau}J_1(\tau,u)>0
 \label{eq:app-J1-limit}
\end{equation}
uniformly on the compact interval.  The source of each higher variation
contains at least two decaying lower variations.  Multiplication by the
homogeneous integrating factor makes these sources integrable, and induction
gives
\begin{equation}
 \partial_u^aX=\order(e^{-\bar\kappa\tau}),
 \qquad 1\le a\le4.
 \label{eq:app-four-variations}
\end{equation}
It follows that the limit in Eq.~\eqref{eq:amplitude-coordinate} is $C^4$ and
has positive derivative.  Passing from $u$ to $v$ preserves the uniform
estimates.

To control transverse derivatives, write $X=zW$.  On a compact $v$ interval
Eq.~\eqref{eq:W-equation} is a parameter-dependent Volterra equation.  After
subtracting the terms displayed in
Eqs.~\eqref{eq:X-p2}--\eqref{eq:X-p3}, its right-hand side and its first two
$z$ derivatives have the corresponding weighted integrable bounds.
Differentiation up to four times in $v$ produces linear Volterra equations
whose sources are finite products of already controlled variations.
Gronwall's inequality yields
\begin{equation}
 \partial_v^a\partial_z^b\mathcal R(v,z)
 =\order(z^{3-b}),
 \qquad 0\le a\le4,\quad0\le b\le2,
 \label{eq:app-weighted-remainder}
\end{equation}
for the remainder $\mathcal R$ in the three displayed regimes, with the
obvious stronger power whenever the next exponent exceeds three.

Equation~\eqref{eq:rho-integral} integrates $J/z$ without loss of parameter
regularity.  Since $\rho_z(v,0)=1/\bar\kappa$, the parameter-dependent inverse
function theorem gives $z(v,\rho)$ with the same two transverse derivatives.
The derivative budget for $\cF=J^2/S-2\rho_v$ explains the use of four
longitudinal variations: two $v$ derivatives of $\rho_v$ contain the integral
of $X_{vvvv}/z$.  Consequently every mixed derivative of $R$ and $\cF$ of
total order at most two extends continuously to $\rho=0$.

For the all-order statement, let $X_{<L}$ be the finite formal sum determined
recursively below a cutoff $L\notin E_p$.  The remainder $Y=X-X_{<L}$ obeys
\begin{equation}
 (z\partial_z-1)Y=A(v,z)Y+B(v,z,Y)+\order(z^L),
 \label{eq:app-remainder-equation}
\end{equation}
where $A=\order(z)$ and $B=\order(Y^2)$.  The inverse
\eqref{eq:volterra-inverse}, followed by a contraction on a sufficiently small
collar, gives $Y=\order(z^L)$.  Applying $\partial_v^a(z\partial_z)^b$ yields a
triangular system with the same leading operator and proves
Eq.~\eqref{eq:conormal-remainder}.  This argument also shows directly that the
constants may depend on the fixed values of $\alpha$, $p$, $L$, and the
derivative orders while remaining uniform on $I_v$.

\section{Finite reflection in joint H\"older spaces}
\label{app:holder-extension}

Let $Q_+=I_v\times[0,\epsilon)$ and let
$f\in C^{N,\vartheta}(Q_+)$, where $N\ge2$ and
$0<\vartheta<1$.  Choose $\delta<\epsilon/(N+1)$ and define $E_Nf$ by
Eq.~\eqref{eq:EN} on
$Q=I_v\times(-\delta,\epsilon)$.  The moment identities
\eqref{eq:EN-moments} give, for $a+b\le N$,
\begin{equation}
 \lim_{\rho\uparrow0}
 \partial_v^a\partial_\rho^bE_Nf(v,\rho)
 =\partial_v^a\partial_\rho^bf(v,0).
 \label{eq:app-mixed-jet}
\end{equation}
Thus all derivatives through order $N$ extend continuously across the
boundary.

For two points on the negative side, the chain rule and the H\"older estimate
on $Q_+$ give
\begin{equation}
 [D^NE_Nf]_{\vartheta;Q_-}
 \le
 \left(\sum_{j=1}^{N+1}|a_j|j^{N+\vartheta}\right)
 [D^Nf]_{\vartheta;Q_+}.
 \label{eq:app-negative-holder}
\end{equation}
Consider next $P=(v,\rho)$ with $\rho\ge0$ and
$\widehat P=(\widehat v,\widehat\rho)$ with $\widehat\rho<0$.
Let $T_Nf$ be the joint Taylor polynomial at a boundary point between their
$v$ coordinates.  The moment identities imply $E_NT_Nf=T_Nf$.
Subtracting this common polynomial and applying the one-sided Taylor
remainder estimates gives
\begin{equation}
 |D^N(E_Nf)(P)-D^N(E_Nf)(\widehat P)|
 \le C\|f\|_{C^{N,\vartheta}(Q_+)}
 |P-\widehat P|^\vartheta.
 \label{eq:app-cross-holder}
\end{equation}
The constant depends only on $N$, $\vartheta$, the compact $v$ interval, and
the collar widths.  Equations~\eqref{eq:app-mixed-jet}--\eqref{eq:app-cross-holder}
prove
\begin{equation}
 \|E_Nf\|_{C^{N,\vartheta}(Q)}
 \le C_{N,\vartheta,Q}\|f\|_{C^{N,\vartheta}(Q_+)}.
 \label{eq:app-extension-bound}
\end{equation}

The argument applies in each regular angular chart and to every scalar
coefficient of the spherically symmetric metric.  On overlaps the positive
side transition functions are unchanged.  Shrinking the new collar and using
the same finite angular atlas therefore produces a joint
$C^{N,\vartheta}$ Lorentz metric on a neighborhood of the full local horizon
sphere.

\section{Einstein tensor and the exact-copy join}
\label{app:perfect-fluid}

On every physical McVittie open set, the mixed Einstein tensor has
characteristic polynomial
\begin{equation}
 \det(G^a{}_b-\lambda\delta^a{}_b)
 =(\lambda+8\pi\rho)(\lambda-8\pi P)^3.
 \label{eq:Einstein-characteristic}
\end{equation}
Its timelike eigenspace gives the perfect-fluid decomposition, and the radial
energy current in that frame vanishes.  Along the ingoing family,
\begin{equation}
 h^2-h_\infty^2=\frac{4h_\infty^2z^p}{(1-z^p)^2},
 \qquad
 \dot h=-2\bar\kappa p h_\infty
 \frac{z^p}{(1-z^p)^2}.
 \label{eq:app-matter-tail}
\end{equation}
The affine contraction retains the boost factor,
\begin{equation}
 T_{ab}k^ak^b=(\rho+P)\mathcal Q^2S
 \sim C_T\rho^{p-2},
 \label{eq:app-null-matter}
\end{equation}
which is finite and nonzero at $p=2$.

For completeness, the Einstein tensor of
$g=2\dd v\dd\rho+\cF\dd v^2+R^2\dd\Omega^2$ obeys
\begin{align}
 G_{\rho\rho}&=-\frac{2R_{\rho\rho}}R,
 \label{eq:app-Grr}\\
 \frac{G_{\theta\theta}}{R^2}
 &=-\frac{2\cF R_{\rho\rho}+R\cF_{\rho\rho}
 +2\cF_\rho R_\rho-4R_{v\rho}}{2R}.
 \label{eq:app-Gangular}
\end{align}
The remaining orbit entries follow from the same warped-product formulas.
Insertion of Eqs.~\eqref{eq:R-boundary-jet}--\eqref{eq:F-boundary-jet}
gives Eq.~\eqref{eq:G-boundary}.  On the orbit plane its mixed form is
\begin{equation}
 G^i{}_j\big|_{\cH}=
 \begin{pmatrix}
 -\Lambda&\nu\\
 0&-\Lambda
 \end{pmatrix},
 \qquad
 G^A{}_B\big|_{\cH}=-\Lambda\delta^A{}_B.
 \label{eq:app-G-matrix}
\end{equation}
At $p=2$, $(G+\Lambda I)^2=0$ and $G+\Lambda I\ne0$, establishing the
rank-one type-II limit.

We next construct the second exact open side.  On an independent copy choose a
compact negative amplitude interval $-\!I_v$ and solve
\begin{equation}
 W_z=\frac{\Phi(zW,z^p)-zW}{z^2},
 \qquad W(\widehat v,0)=\widehat v.
 \label{eq:app-second-copy}
\end{equation}
For $p\ge2$ the right-hand side extends continuously to $z=0$ and is locally
Lipschitz in $W$.  A sufficiently small uniform interval gives
$x=\alpha+zW>2$ and $W_{\widehat v}>0$, so the affine construction
\eqref{eq:rho-integral} applies.  This produces an actual McVittie
half-neighborhood with Gaussian-null coordinates
$(\widehat v,\widehat\rho)$.

Under $(\widehat v,\widehat\rho)=(-v,-\rho)$, a derivative of order $(a,b)$
acquires the factor $(-1)^{a+b}$.  The boundary data
\begin{align}
 R&=\alpha,\qquad R_\rho=\bar\kappa v,\nonumber\\
 R_{v\rho}&=\bar\kappa,\qquad R_{\rho\rho}=A_p,\nonumber\\
 \cF&=\cF_\rho=0,\qquad\cF_{\rho\rho}=2/\alpha^2
 \label{eq:app-paired-jets}
\end{align}
are invariant under the simultaneous sign reversal, with
$A_p$ given by Eq.~\eqref{eq:R-rhorho-boundary}.  The remaining jets of total
order at most two vanish or follow by differentiating these expressions.
The piecewise metric is therefore jointly $C^2$.

Each open side satisfies the classical Einstein equation.  Since a $C^2$
metric has continuous Einstein tensor, the two stress tensors join uniquely
to $T=G/(8\pi)$.  There is no jump in the first metric jet and hence no
distributional curvature supported on $\cH$.  To verify conservation, choose
smooth Lorentz metrics $g_\epsilon\to g$ in $C^2_{\rm loc}$.  For every compactly
supported test one-form $\varphi_b$, the smooth Bianchi identity gives
\begin{equation}
 \int G(g_\epsilon)^{ab}\nabla^{g_\epsilon}_a\varphi_b\,
 \dd\mu_{g_\epsilon}=0.
 \label{eq:app-weak-bianchi}
\end{equation}
The Einstein tensors converge uniformly and the connections and volume forms
converge in the corresponding lower norms.  Passing to the limit proves
$\nabla^aG_{ab}=0$ distributionally across the join.

\section{Fluid mixtures and a regular mass profile}
\label{app:profiles}

For a finite collection of noninteracting positive-density fluids in a flat
background,
\begin{equation}
 H^2-H_\infty^2=\frac{8\pi}{3}
 \sum_j\rho_{j0}a^{-3(1+w_j)},
 \qquad \rho_{j0}>0.
 \label{eq:multifluid-friedmann}
\end{equation}
Let $w_{\rm slow}$ be the smallest $w_j$ among the nonvanishing components.
Factoring out the slowest scale gives
\begin{equation}
 H^2-H_\infty^2=a^{-3(1+w_{\rm slow})}
 \left[\frac{8\pi}{3}
 \sum_{w_j=w_{\rm slow}}\rho_{j0}+o(1)\right].
 \label{eq:multifluid-leading}
\end{equation}
The bracket has a positive leading coefficient, including when several
components share the same exponent.  Since $a\sim e^{H_\infty t}$, the
leading decay rate is
\begin{equation}
 \lambda_{\rm slow}=3(1+w_{\rm slow})H_\infty.
 \label{eq:multifluid-lambda}
\end{equation}
Replacing $w$ by $w_{\rm slow}$ in Eq.~\eqref{eq:main-p} therefore gives the
regularity exponent of the mixture.

The role of the limiting black-hole profile can be displayed with the Hayward
mass function \cite{Hayward2006}
\begin{equation}
 \mu_H(x)=\frac{x^3}{x^3+2q_H},
 \qquad q_H=\left(\frac{L}{m}\right)^2,
 \label{eq:hayward-profile}
\end{equation}
and the limiting lapse
\begin{equation}
 f_H(x)=1-\frac{2\mu_H(x)}x-h_\infty^2x^2.
 \label{eq:hayward-f}
\end{equation}
If $x_H$ is the simple seed-connected black-hole root, its dimensionless
surface gravity is
\begin{align}
 \bar\kappa_H&=\frac12 f_H'(x_H)\nonumber\\
 &=\frac{\mu_H(x_H)}{x_H^2}
 -\frac{\mu_H'(x_H)}{x_H}-h_\infty^2x_H,
 \label{eq:hayward-kappa}
\end{align}
where
\begin{equation}
 \mu_H'(x)=\frac{6q_Hx^2}{(x^3+2q_H)^2}.
 \label{eq:hayward-mu-prime}
\end{equation}
The local $p=2$ threshold is then
\begin{equation}
 w_{2,H}=\frac{2\bar\kappa_H}{3h_\infty}-1.
 \label{eq:hayward-threshold}
\end{equation}
For $h_\infty^2=1/100$ and $q_H=1/8$, root isolation on the branch connected
to the Schwarzschild root gives $w_{2,H}\simeq0.305514$.  Radiation lies above
this local threshold and below the McVittie value in
Eq.~\eqref{eq:default-thresholds}.  The comparison in
Fig.~\ref{fig:thresholds} isolates the effect of the limiting root slope; its
global interpretation remains specific to the spacetime realizing the chosen
profile.  Exact cosmological embeddings with regular horizons provide a
broader physical context for such profiles \cite{CadoniEtAl2026}.

Finally, the exponent in the asymptotic analysis of
Ref.~\cite{KaloperKlebanMartin2010} maps to the present notation through
\begin{equation}
 \frac{H_\infty}{\alpha_{\rm KKM}}=\bar\kappa,
 \qquad \alpha_{\rm KKM}\delta=p,
 \label{eq:KKM-map}
\end{equation}
so the relevant second-derivative power is $p-2$.  The decay constant denoted
$B$ in the special-rate analysis of
Ref.~\cite{DaSilvaFontaniniGuariento2013} equals $\bar\kappa$; its resonant
case is $p=1$.

\section*{Data availability}
No observational or experimental data were used in this work.  Symbolic
checks and the script generating Fig.~\ref{fig:thresholds} accompany the
source files.

\bibliography{references}

\end{document}